\newcommand{\source}{\mathsf{source}}
\newcommand{\target}{\mathsf{target}}
\newcommand{\coun}{\mathsf{count}}
\title{State Complexity of Chromatic Memory in Infinite-Duration Games} 
\author{Alexander Kozachinskiy\footnote{kozlach@mail.ru}}
\newtheorem{theorem}{Theorem}
\newtheorem{corollary}[theorem]{Corollary}
\newtheorem{lemma}[theorem]{Lemma}
\newtheorem{definition}{Definition}
\newcommand{\col}{\mathsf{col}}
\newcommand{\mm}{\mathsf{mm}}
\newcommand*{\rom}[1]{\expandafter\@slowromancap\romannumeral #1@}
\begin{document}
\maketitle

\begin{abstract}
A major open problem in the area of infinite-duration games is to characterize winning conditions that are determined in finite-memory strategies. Infinite-duration games are usually studied over edge-colored graphs, with winning conditions that are defined in terms of sequences of colors.
In this paper, we investigate a restricted class of finite-memory strategies called \emph{chromatic} finite-memory strategies. While general finite-memory strategies operate with sequences of edges of a game graph, chromatic finite-memory strategies observe only colors of these edges.

Recent results in this area show that studying  finite-memory determinacy is more tractable when we restrict ourselves to chromatic strategies.
On the other hand, 
as was shown by Le Roux (CiE 2020), determinacy in general finite-memory strategies implies determinacy in chromatic finite-memory strategies. Unfortunately, this result is quite inefficient in terms of the state complexity: to replace a winning strategy with few states of general memory, we might need much more states of chromatic memory. The goal of the present paper is to find out the exact state complexity of this transformation.

For every winning condition and for every game graph with $n$ nodes we show the following: if this game graph has a winning strategy with $q$ states of general memory, then it also has a winning strategy with $(q + 1)^n$ states of chromatic memory. We also show that this bound is almost tight. For every $q$ and $n$, we construct a winning condition and a game graph with $n + O(1)$ nodes, where one can win with $q$ states of general memory, but not with $q^n - 1$ states of chromatic memory.

\end{abstract}

\section{Introduction}

We study deterministic, two-player, turn-based, infinite-duration games on finite arenas. An arena is a finite directed graph whose set of nodes $V$ is partitioned into two subsets $V_0, V_1\subseteq V$. A game over such an arena is played as follows. Two players (in this paper, we call them Player 0 and Player 1) are traveling over the nodes of the graph. When they are in a node from $V_0$, Player $0$ decides where to go next. Similarly, when they are in a node from $V_1$, Player 1 decides where to go next. Players can move only along edges of the graph.

After infinitely many turns, this process produces an infinite path in our graph. A \emph{winning condition} determines, which infinite paths are winning for Player 0 and which are for Player 1. In the literature, one usually defines winning conditions through \emph{colors}. Namely, fix a set $C$ of colors. Then color edges of an arena into elements of $C$. Once this is done, any subset $W\subseteq C^\omega$ defines a winning condition in the following way. Take an infinite path in our graph. If its ``coloring'' (that is, the sequence of colors of its edges) belongs to $W$, we make this path winning for Player 0. Otherwise, we make this path winning for Player 1. In this paper, we consider only winning conditions that are defined through colors.

A utility of colors is that we do not have to define a winning condition for each arena separately. Indeed, once we fix a set of colors $C$ and a subset $W\subseteq C^\omega$, we can treat $W$ as a winning condition in any arena with edges colored by elements of $C$.

\medskip

A major open problem in this area is to characterize winning conditions that are determined in \emph{finite-memory strategies}. This is motivated by applications to reactive synthesis~\cite{thomas2004automata},  where infinite-duration games on graphs model an interaction between a system and the environment. Finite-memory determinacy over \emph{infinite} arenas has also been used in decidability of logical theories~\cite{muchnik1992game}.

Let us discuss the concept of a finite-memory strategy in more detail. The games we are studying are of infinite duration. This means that as a game goes on, the players need more and more memory to remember everything what has happened so far. However, a player might have a smart strategy which does not require the whole ``transcript''  of the previous development of the game. For instance, a strategy might only require to know whether the number of steps made so far is even or odd.
 In this case, 1 bit of memory is sufficient to ``implement'' this strategy (we simply reverse the value of this bit each time we make a step). In general, if there exists a constant $k > 0$ such that our strategy never stores more than $k$ bits of information during the play, then we call this strategy finite-memory.

More formally, finite-memory strategies store information through \emph{memory structures}.
A memory structure $\mathcal{M}$ is a finite automaton whose input alphabet is the set of edges of an arena. We call a strategy of one of the players an \emph{$\mathcal{M}$-strategy} if this strategy stores information according to $\mathcal{M}$. More specifically, one can imagine that, say, Player 0  carries $\mathcal{M}$ with them during the play. In each turn, Player 0 takes the edge along which the game goes on in this turn, and feeds this edge into $\mathcal{M}$. Then, in each moment of the game, the current state of $\mathcal{M}$ informally represents the memory of Player 0. If a strategy $S$ of this player never requires anything besides the current state of $\mathcal{M}$, then we call $S$ an $\mathcal{M}$-strategy. 

We call $S$ a \emph{finite-memory strategy} if it is an $\mathcal{M}$-strategy for some memory structure $\mathcal{M}$. If a memory structure $\mathcal{M}$ has $q$ states, then we call any $\mathcal{M}$-strategy a $q$-state strategy.

We say that a winning condition $W\subseteq C^\omega$ is \emph{finite-memory determined} for Player $i\in\{0, 1\}$ if in any arena, where this player has a winning strategy w.r.t.~$W$, this player also has a finite-memory winning strategy. In practice, we are interested in finite-memory winning strategies with as few states as possible. This leads to a concept of \emph{memory complexity}. To define it, we first have to fix a parametrization of arenas by natural numbers. For every $n\in\mathbb{N}$, there has to be only finitely many arenas where the value of the parameter does not exceed $n$.  When the set of colors is finite, one natural example of such a parameter is the number of nodes. Now,
the memory complexity of a finite-memory determined winning condition $W$ for Player $i\in\{0,1\}$ is the following function. For every $n\in\mathbb{N}$, it maps $n$ to smallest $q$ such that in all arenas, where Player $i$ can win w.r.t.~$W$ and where the value of the parameter does not exceed $n$, Player $i$ has a $q$-state winning strategy.

\medskip

There has been the following progress in the problem of characterizing  finite-memory  determined winning conditions. In 2005, Gimbert and Zielonka~\cite{gimbert2005games} obtained a sufficient and necessary condition for determinacy in \emph{memory-less strategies} (that is, in strategies that are defined over ``useless'' 1-state memory structures). Since then, several conditions that are sufficient (but not necessary) for finite-memory determinacy were established in the literature~\cite{le2018extending,bouyer2021finite}.

Recently, there have been advances in studying \emph{chromatic} memory complexity. In contrast to the general memory complexity, the chromatic memory complexity takes into account only so-called chromatic strategies (we define this kind of strategies in the next subsection). Bouyer et al.~\cite{bouyer_et_al:LIPIcs:2020:12836} obtained a complete characterization of winning conditions (and, more generally, of preference relations) with constant chromatic memory complexity. It is open to extend this result to winning conditions with constant\footnote{As we mentioned, memory complexity depends on the parametrization of the arenas. However, the class of winning conditions with constant memory complexity is independent of the choice of the parametrization. Indeed, these are winning conditions for which there exists a constant $q$ such that all arenas require at most $q$ states of memory.} general memory complexity. This motivates a study of the relationship between chromatic and general memory complexities -- as, potentially, there might be a way of converting results about chromatic memory into results about general memory. The present paper is devoted to this kind of questions.

\subsection{Chromatic strategies}

 A chromatic strategy is a strategy which, to make its moves, only needs to know the sequence of colors along a current play. That is, chromatic strategies use only a part of the description of a current play; the full description consists of  the sequence of edges that were played so far.

 We are interested in chromatic strategies that are additionally finite-memory. Such strategies can be defined through chromatic memory structures. A memory structure is chromatic if its transition function does not distinguish edges of the same color. Alternatively, a chromatic memory structure can be viewed as a finite automaton whose input alphabet is not the set of edges, but rather the set of colors. That is, each time we feed an edge into a chromatic memory structure, it only takes as an input the color of this edge.

In this paper, we study the following problem. Fix a winning condition and an arena. Suppose that Player $i$ has a winning strategy with $q$ states of general memory in this arena. What is the minimal $Q$ such that Player $i$ has a winning strategy with $Q$ states of chromatic memory in this arena?

It is not even obvious whether $Q$ is always finite (in other words, whether determinacy in general finite-memory strategies implies determinacy in chromatic finite-memory strategies). The finiteness of $Q$ was proved by Le Roux~\cite{le2020time}. In this paper, we obtain a tight bound on $Q$ in terms of $q$ and the number of nodes of an arena. But before presenting this in more detail, let us give an example where $Q$ is strictly larger than $q$.

\subsection{Example}
\label{subsec:example}

\begin{figure}[h!]
\centering
  \includegraphics[width=0.5\textwidth]{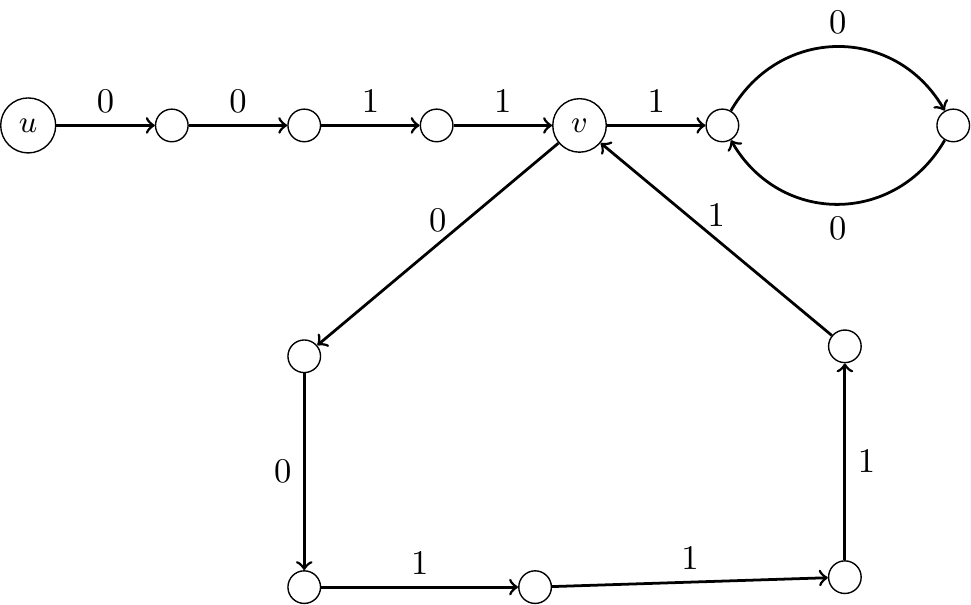}
  \caption{Example.}
\label{chr_example}
\end{figure}

In this example, the set of colors is $\{0,1\}$, the starting node is $u$, and Player 0 is the one to move everywhere. 
 Assume that the goal of Player 0 is to maximize the maximum number of consecutive 1's in the play. The best number Player 0 can achieve is 5. Namely, Player 0 can go to $v$, then to the loop and, after returning to $v$, to the right. Note that we do different things at $v$ when we are there for the first time and when for the second time. That is, the content of our memory has to be different at these moments. This can be realized with a 2-state memory structure which changes its state from ``has not been to $v$''  to ``has been to $v$'' after seeing any edge from $v$. However, such a memory structure is not chromatic -- it does different things for edges of the same color.

In fact, one can show that to attain 5 in this problem, we require $Q\ge 3$ states  of chromatic memory. Indeed, when we first come to $v$, we have to go to the loop -- otherwise the maximum number of consecutive 1's will be 3. Now, when we return to $v$, our memory has to be different from what it was the first time -- otherwise we will stay on the loop forever, and the maximum number of consecutive 1's will be 4. Therefore, our chromatic memory structure has to come into different states on  $0011$ and on $0011001111$. It can be checked via a computer search that no 2-state deterministic automaton over the alphabet $\{0,1\}$ can distinguish these two words.

\subsection{Our contribution}

As we mentioned, Le Roux has shown~\cite{le2020time} that determinacy in general finite-memory strategies implies determinacy in chromatic finite-memory strategies. More specifically, Le Roux obtained the following bound. Fix an arena $\mathcal{A}$ with $n$ nodes and a winning condition $W$.  Assume that  Player 0 has in $\mathcal{A}$ a $q$-state strategy which is winning w.r.t.~$W$. Then Player 0 also has in $\mathcal{A}$ a $2^{q (n^2 + 1)}$-state chromatic strategy which is winning w.r.t.~$W$. 

In fact, Le Roux obtained this as a corollary of a more general result. Namely, he studied a problem of the uniformization of winning strategies. In this problem, given an equivalence relation on the set of finite paths of the arena, we want a winning strategy which does the same thing on paths from the same equivalence class. In our case, the equivalence relation is ``to have the same sequence of colors and the same endpoint''. Let us also mention that Le Roux obtained this result for a more general class of games than we consider in this paper -- for deterministic \emph{concurrent} games. In these games, in each turn two players make two moves \emph{simultaneously}, and then one computes the next node via some predetermined transition function.

\textbf{The summary of our results.} First, we obtain an improvement of the aforementioned result of Le Roux. Namely, for any winning condition $W$ and for any arena $\mathcal{A}$ with $n$ nodes we show the following: if Player $0$ has in $\mathcal{A}$ a $q$-state winning strategy w.r.t.~$W$, then Player 0 also has in $\mathcal{A}$ a $(q+1)^n$-state chromatic winning strategy w.r.t~$W$.
Of course, the same  result holds for Player 1. We also provide an analog of this result for preference relations, with slightly worse bound $Q \le (qn + 1)^n$. Finally, we show that our upper bound is essentially tight. Namely, for every $n$ and $q$ we provide a winning condition $W$ and an arena $\mathcal{A}$ with $n + 3$ nodes, where Player 0 has a winning $q$-state strategy w.r.t.~$W$, but no winning chromatic strategy with less than $q^n$ states.

In fact, our upper bounds hold for deterministic concurrent games as well. However, to simplify the exposition, we prove them only for turn-based games.

\medskip

What consequences does it have for the memory complexity?
The bound of Le Roux means that the chromatic memory complexity is at most exponential in the general memory complexity and in the number of nodes. Our improvement is a removal of the exponential dependency on the general memory complexity. Still, there is an exponential dependency on the number of nodes. Note that this exponential dependency turns into just a linear dependency if we measure memory not in states but in bits. As a consequence, we obtain that for a class ``PSPACE'' of winning conditions with polynomial bit-memory complexity (in the number of nodes), it is unimportant whether we use chromatic or general strategies.

Unfortunately, our lower bound does not provide any separation between the chromatic and the general memory complexity. We only provide an exponential separation between the chromatic and the general memory in individual arenas, using different winning conditions for different arenas. It is not clear how to combine these winning conditions into a single one for which the general memory grows more slowly than the chromatic memory. We return to this problem in the next subsection.

\subsection{Other related works and open problems}

 Chromatic finite-memory strategies were first studied by Kopczy\'{n}ski~\cite{phdthesis} in 2008. He introduced the following notation for winning conditions with constant memory complexity. For $i\in\{0,1\}$ and for a winning condition $W\subseteq C^\omega$, let $\mm_i(W)$ denote the minimal $q$ such that for every finite arena the following holds: if Player $i$ has a winning strategy w.r.t.~$W$ in this arena, then Player $i$ has also a $q$-state winning strategy w.r.t.~$W$ in this arena. If no such $q$ exists, then we set $\mm_i(W)=+\infty$. That is, $W$ has constant memory complexity for Player $i$ if and only if $\mm_i(W)$ is finite.  Kopczy\'{n}ski also introduced an analogue of this notation for chromatic strategies.
Namely, let $\mm_i^\chi(W)$ 
denote the minimal $q$ such that for every finite arena the following holds: if Player $i$ has a winning strategy w.r.t.~$W$ in this arena, then Player $i$ has also a chromatic $q$-state winning strategy w.r.t.~$W$ in this arena. Again, if no such $q$ exists, we set $\mm_i^\chi(W) = +\infty$. Clearly, we have $\mm_i(W) \le \mm^\chi_i(W)$ for every winning condition $W$.

It is classical~\cite{buchi1969solving} that all $\omega$-regular winning conditions $W\subseteq C^\omega$ have constant chromatic memory complexity: $\mm_0^\chi(W) < +\infty, \mm_1^\chi(W)< +\infty$. Kopczy\'{n}ski gave an exponential-time algorithm which computes $\mm_i^\chi(W)$ for an $\omega$-regular $W$, provided that $W$ is given as a  deterministic parity automaton. He then conjectured that $\mm_i^\chi(W) = \mm_i(W)$ for every winning condition $W$. His motivation was as follows: if this equality holds, then we automatically have an exponential-time algorithm to compute $\mm_i(W)$ for an $\omega$-regular $W$.

As we mentioned, in 2020 Bouyer et al.~\cite{bouyer_et_al:LIPIcs:2020:12836} obtained a \emph{complete characterization} of winning conditions with constant chromatic memory complexity for both players, that is, of winning conditions satisfying $\mm^\chi_0(W) < +\infty, \mm^\chi_1(W) <+\infty$. Broadly speaking, they gave a graph-theoretic sufficient and necessary condition for  determinacy with constant chromatic memory complexity. In other words, they reduced a game-theoretic reasoning in this case to a reasoning about graphs.

This result motivates the following modification of the Kopczy\'{n}ski's conjecture. Is it true for every winning condition $W$ that $\mm_i(W) < +\infty \iff \mm_i^\chi(W) < +\infty$? In other words, is it true that a winning condition with constant general memory complexity also has constant chromatic memory complexity? If that were true, we would have had a complete characterization of winning conditions with constant general memory complexity, due to results of Bouyer et al.

Recently, Casares~\cite{casares2021minimisation} disproved the original conjecture of Kopczy\'{n}ski. More specifically, for every $n$ he constructed a winning condition $W_n\subseteq \{1, 2, \ldots, n\}^\omega$ such that $\mm_0^\chi(W_n) = n$ and $\mm_0(W_n) = 2$. This winning condition is arranged as follows. We put $\alpha\in\{1, 2, \ldots, n\}^\omega$ into $W_n$ if and only if there are exactly two  elements of $\{1, 2, \ldots, n\}$ that occur infinitely often in $\alpha$. Let us note that $W_n$ is $\omega$-regular.

One can notice that this result of Casares also gives an example of a winning condition $W$ with constant general memory complexity, but super-constant chromatic memory complexity. Thus, it answers negatively to our modification of the Kopczy\'{n}ski's conjecture.  However, this $W$ is over an \emph{infinite} set of colors. Namely, set $C = \mathbb{N}$. Let $W\subseteq \mathbb{N}^\omega$ be the set of all $\alpha\in\mathbb{N}^\omega$ such that there are exactly 2 elements of $\mathbb{N}$ occurring infinitely often in $\alpha$. On one hand, $W$ is at least as hard (in terms of memory requirements) as $W_n$, for every $n$, so we have $\mm_0^\chi(W) = +\infty$. On the other hand, since any finite arena involves only finitely many colors, and since $\mm_0(W_n) = 2$ for every $n$, we have $\mm_0(W) = 2$.

It is open if there exists a winning condition $W$ over a \emph{finite} set of colors with $\mm_0(W) < +\infty$ and $\mm^\chi_0(W) = +\infty$. Let us note here that such a separation can be easily obtained for \emph{one-player} games (that is, for games over arenas where one of the players controls all the nodes), via the example from subsection \ref{subsec:example}. Namely, let the set of colors be $\{0, 1\}$, and assume that the goal of Player 0 is to maximize the maximum number of consecutive 1's in a sequence of colors. It is not hard to see that Player 0 can play optimally with just 2 states of memory in every one-player arena. An idea is that we first have to reach the starting point of the longest all-ones path $p$, and then we have to go through $p$. We may assume that $p$ is simple because otherwise there is a simple all-ones cycle which allows us to win trivially. There is a slight problem that the path to the starting node of $p$ and the path $p$ itself may have common nodes, and we have to do different things at these nodes at different moments. So we just have to remember whether we have already been to the starting node of $p$ or not. However, this kind of memory is essentially non-chromatic. In fact, by considering arenas as on Figure \ref{chr_example}, one can show that there is no constant $q$ such that Player 0 can play optimally with $q$ states of chromatic memory in all one-player arenas. Unfortunately, this example has super-constant general memory complexity in two-player arenas, so it does not provide a separation between the chromatic and the general memory complexity in the two-player regime.

The question of 
the relationship between the chromatic and the general memory is not settled in infinite arenas. In contrast to finite arenas,  it is unclear for infinite arenas whether determinacy in general finite-memory strategies implies determinacy in chromatic finite-memory strategies. Note that our upper bounds are inapplicable as they involve the number of nodes. It was recently shown by Bouyer, Randour and Vandenhove~\cite{bouyer2021characterizing} that the class of winning condition that are chromatic finite-memory determined in infinite arenas coincides with the class $\omega$-regular conditions.

Finally, it would be interesting to extend our upper bounds on chromatic memory to \emph{stochastic} games. Seemingly, our proofs do not work in the stochastic setting, so we require new ideas here. It is worth to mention in this context a paper~\cite{bouyer_et_al:LIPIcs.CONCUR.2021.26}, studying winning conditions that have constant chromatic memory complexity in stochastic games.

\medskip

The rest of the paper is organized as follows. In the next section we give preliminaries. In Section \ref{sec:exact} we give the exact statements of our results, equipped with brief overviews of the proofs. Complete proofs of our results are given in the subsequent sections.

\section{Preliminaries}
\textbf{Notation.} For a set $A$, we let $A^*$ (resp., $A^\omega$) stand for the set of all finite (resp., infinite) sequences of elements of $A$. For $x\in A^*$, we let $|x|$ denote the length of $x$ (we also set $|x| = +\infty$ for $x\in A^\omega$). We write $A = B\sqcup C$ for three sets $A, B, C$ when $A = B\cup C$ and $B\cap C = \varnothing$. We let $\circ$ denote the function composition. The set of positive integral numbers is denoted by $\mathbb{Z}^+$.
\subsection{Arenas}

\begin{definition} Let $C$ be any set.
A tuple $\mathcal{A} = \langle V, V_0, V_1, E\rangle$, where $V, V_0, V_1, E$ are four finite sets such that $V = V_0\sqcup V_1$ and $E\subseteq V\times C\times V$, is called an \textbf{arena} over the set of colors $C$ if for every $s\in V$ there exist $c\in C$ and $t\in V$ such that $(s, c, t)\in E$.
\end{definition}

Elements of $V$ will be called nodes of $\mathcal{A}$. Elements of $V_0$ (resp., $V_1$) will be called nodes of Player 0 (resp., Player 1). Elements of $E$ will be called edges of $\mathcal{A}$. For an edge $e = (s, c, t) \in E$ we define $\source(e) = s, \col(e) = c$ and $\target(e) = t$.
We imagine $e\in E$ as an arrow which is drawn from the node $\source(e)$ to the node $\target(e)$ and which is colored by $\col(e)$. The out-degree of a node $s\in V$ is the number of $e\in E$ with $s = \source(e)$. By definition of an arena, the out-degree of every node must be positive.

 We extend the function $\col$ to a function $\col\colon E^*\cup E^\omega\to C^*\cup C^\omega$ by setting:
\[\col(e_1 e_2 e_3\ldots) = \col(e_1)\col(e_2)\col(e_3)\ldots, \qquad e_1, e_2, e_3,\ldots\in E.\]

A non-empty sequence $p = e_1 e_2 e_3 \ldots \in E^*\cup E^\omega$ is called a path if for any $1\le i < |p|$ we have $\target(e_i) = \source(e_{i+1})$. We set $\source(p) = \source(e_1)$ and, if $p$ is finite, $\target(p) = \target(e_{|p|})$.
For technical convenience, with every node $v\in V$ we associate a $0$-length path $\lambda_v$, for which we set $\source(\lambda_v) = \target(\lambda_v) = v$. 

\subsection{Strategies}

Let $\mathcal{A} = \langle V, V_0, V_1, E\rangle$ be an arena over the set of colors $C$. A strategy of Player 0 in $\mathcal{A}$ is any function 
\[S\colon\{p\mid p \mbox{ is a finite path in $\mathcal{A}$ with }\target(p) \in V_0\}\to E,\]
such that for every $p$ from the domain of $S$ we have $\source(S(p)) = \target(p)$.

Intuitively, finite paths in $\mathcal{A}$ represent possible developments of the game in $\mathcal{A}$. When the game starts in a node $s\in V$, the starting position\footnote{We stress that in our setting, the starting node is not fixed. The same strategy $S$ can be applied for different starting nodes.} is represented by the $0$-length path $\lambda_s$. Player $0$ is the one to move after a finite path $p$ if and only if $t =\target(p)$ is a node of Player 0. In this situation, Player 0 must choose some edge from $t$. We then append this edge to $p$. A strategy of Player $0$ fixes the choices of Player 0 in all situations when this player is the one to move.

In this paper, we only mention strategies of Player 0, but, of course, strategies of Player 1 can be defined similarly.

Let us define, what does it mean that a path is consistent with a strategy $S$ of Player 0. First, any 0-length path $\lambda_v$ is consistent with $S$. Now, a non-empty  path $p = e_1 e_2 e_3\ldots$ (which may be finite or infinite) is consistent with  $S$ if the following holds:
\begin{itemize}
\item if $\source(p) \in V_0$, then $e_1 = S(\lambda_{\source(p)})$;
\item for any $1 \le i < |p|$, if $\target(e_i) \in V_0$, then $e_{i + 1} = S(e_1 e_2\ldots e_i)$.
\end{itemize}

For any node $v\in V$ and for any strategy $S$ of Player 0, we let $\col(S, v)\subseteq C^\omega$ be the set of all $\col(p)$ over all infinite paths $p$ such that $\source(p) = v$ and $p$ is consistent with $S$. Less formally, $\col(S, v)$ is the set of all infinite sequences of colors that can be obtained in a play with $S$ from the node $v$. For $U\subseteq V$, we define $\col(S, U) = \bigcup_{v\in U}\col(S, v)$.
\subsection{Winning conditions and preference relations}
A winning condition is any set $W\subseteq C^\omega$. We say that a strategy $S$ of Player 0 is winning from $u\in V$ w.r.t.~to $W$ if $\col(S, u)\subseteq W$. In other words, any infinite play from $u$ against $S$ must give a sequence of colors belonging to $W$.

We also consider a more general class of objectives called \emph{preference relations}.
A preference relation is a total preorder $\sqsubseteq$ on the set $C^\omega$.
Intuitively, when given a preference relation $\sqsubseteq$, the goal of Player 0 is to maximize the sequence of colors in the play w.r.t.~$\sqsubseteq$.

Any two strategies $S_1, S_2$ of Player 0 can be compared w.r.t.~$\sqsubseteq$ (from the perspective of Player 0). Namely, we say that $S_1$ is \emph{better} than $S_2$ from $u\in V$ if there exists $\beta\in \col(S_2, u)$ such that $\alpha \not\sqsubseteq \beta$ for every $\alpha \in \col(S_1, u)$. This means that  there exists a play against $S_2$ which is strictly worse from the viewpoint of Player 0 than any play against $S_1$. Correspondingly, we say that $S_2$ is \emph{at least as good} as $S_1$ from $u\in V$ if it is not true that $S_1$ is better than $S_2$ from $u$, that is, if for every $\beta\in\col(S_2, u)$ there exists $\alpha\in \col(S_1, u)$ such that $\alpha\sqsubseteq \beta$.

It is straightforward to check that for any $u\in V$, the relation ``at least as good from $u$'' is a total preorder on the set of strategies of Player 0.
\subsection{Memory structures}

Let $\mathcal{A} = \langle V, V_0, V_1, E\rangle$ be an arena over the set of colors $C$.
A memory structure in $\mathcal{A}$ is a tuple $\mathcal{M} = \langle M, m_{init}, \delta\rangle$, where $M$ is a finite set, $m_{init}\in M$ and $\delta\colon M\times E\to M$. In other words, a memory structure $\mathcal{M}$ is a deterministic finite automaton whose input alphabet is the set of edges of $\mathcal{A}$. Thus, $M$ serves as the set of states of our memory structure, $m_{init}$ serves as its initial state, and $\delta$ as its transition function. Given $m\in M$, we inductively extend the function $\delta(m,\cdot)$ to arbitrary finite sequences of edges as follows:
\begin{align*}
\delta(m, \mbox{empty sequence}) &= m,\\ 
\delta(m, se) &= \delta(\delta(m, s), e), \qquad s\in E^*, e\in E.
\end{align*}
Thus, $\delta(m, s)$ for $s\in E^*$ is the state into which our memory structure comes from the state $m$ after reading the sequence $s$.

We say that a memory structure $\mathcal{M} = \langle M, m_{init}, \delta\rangle$ is chromatic if there exists a function $\sigma\colon M \times C\to M$ such that $\delta(m, e) = \sigma(m, \col(e))$ for every $e\in E$ and $m\in M$.
In other words, a chromatic memory structure does not distinguish edges of the same color. Correspondingly, 
it will be sometimes convenient to view chromatic memory structures as finite automata over the set $C$ (and not over the set of edges of $\mathcal{A}$).

Let $S$ be a strategy of Player 0 and $\mathcal{M} = \langle M, m_{init}, \delta\rangle$ be a memory structure. We say that $S$ is an $\mathcal{M}$-strategy if for any two paths $p_1, p_2$ with $\target(p_1) = \target(p_2)\in V_0$ it holds that:
\[\delta(m_{init}, p_1) = \delta(m_{init}, p_2) \implies S(p_1) = S(p_2)\]
In other words, the value of $S(p)$ for an $\mathcal{M}$-strategy $S$ solely depends on $\target(p)$  and on a state of $\mathcal{M}$  after $p$ (assuming that initially $\mathcal{M}$ is in $m_{init}$). Thus, with any $\mathcal{M}$-strategy $S$ one can associate the \emph{next-move} function of $S$. This is a function $n_S\colon V_0\times M\to E$ defined as follows: to determine $n_S(v, m)$, we take an arbitrary finite path $p$ with $\target(p) = v$ and $\delta(m_{init}, p) = m$, and set $n_S(v, m) = S(p)$. If there is no such path $p$ at all, we define $n_S(v, m)$ arbitrarily. Less formally, $n_S(v, m)$ is the move of $S$ from the node $v$ when the state of $S$ is $m$. Note that the next-move function completely determines the corresponding strategy. For the sake of brevity, in the paper we will use the same letter for a strategy and for its next-move function. That is, when $S$ is an $\mathcal{M}$-strategy, we use the letter $S$ in two different ways. First, $S(p)$ denotes the move of $S$ after a finite path $p$. Second, $S(v, m)$ for $v\in V_0, m\in M$ denotes the value of the next-move function of $S$ on the pair $(v, m)$.

For brevity, if $S$ is an $\mathcal{M}$-strategy, we let \emph{the state of $S$ after a finite path $p$}  stand for $\delta(m_{init}, p)$, where $\delta$ is the transition function of $\mathcal{M}$.

We say that $S$ is a $q$-state strategy if $S$ is an $\mathcal{M}$-strategy for some memory structure $\mathcal{M} = \langle M, m_{init}, \delta\rangle$ with $|M| = q$. If $\mathcal{M}$ is additionally chromatic, then we say that $S$ is a chromatic $q$-state strategy.

\section{Exact Statements of the Results and Overviews of the Proofs}
\label{sec:exact}

The exact statement of our main upper bound is the following
\begin{theorem}
\label{thm:first_upper}
 For any $n, q\in\mathbb{Z}^+$, for any arena $\mathcal{A} = \langle V, V_0, V_1, E\rangle$ with $n$ nodes, for any set $U\subseteq V$, and for any $q$-state strategy $S_1$ of Player 0 in $\mathcal{A}$, there exists a chromatic $(q + 1)^n$-state strategy $S_2$ of Player 0 such that $\col(S_2, U) \subseteq \col(S_1, U)$.
\end{theorem}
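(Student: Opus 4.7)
The plan is to realize $S_2$ through a subset-style construction in which the chromatic memory, for each node of $V$, tracks a possible $\mathcal{M}_1$-state consistent with what has been seen so far. Let $\mathcal{M}_1 = \langle M_1, m_{init}, \delta\rangle$ be a memory structure witnessing that $S_1$ is a $q$-state strategy. Take as memory states of $\mathcal{M}_2$ all functions $f\colon V\to M_1\cup\{\bot\}$, where $\bot$ is a fresh symbol; this gives exactly $(q+1)^n$ states. The initial state $f_0$ sends every $v\in U$ to $m_{init}$ and every other node to $\bot$. Given a state $f$ and a color $c\in C$, define $f' = \sigma(f,c)$ target-by-target: for each $t\in V$ form the set of triples $(s, f(s), e)$ with $f(s)\neq\bot$, $e = (s,c,t)\in E$, and $e = S_1(s,f(s))$ whenever $s\in V_0$; if this set is empty set $f'(t)=\bot$, otherwise pick one triple $(s^*, m^*, e^*)$ by a fixed tiebreaking rule and set $f'(t) = \delta(m^*, e^*)$. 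Since $\sigma$ only inspects the color $c$, the memory is chromatic. Define $S_2$ by $S_2(v,f)=S_1(v,f(v))$ when $f(v)\neq\bot$, and arbitrary otherwise.

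The core step is an invariant proved by induction on the length of plays: after any finite play of $S_2$ starting from some $u\in U$ and ending at a node $v$ with chromatic state $f$, for every $s\in V$ with $f(s)\neq\bot$ there exists a finite path from some $u'\in U$ to $s$, consistent with $S_1$, whose coloring equals the coloring of the current play and which leaves $\mathcal{M}_1$ in state $f(s)$. Applied to $s=v$, this will in particular guarantee $f(v)\neq\bot$ along any $S_2$-play from $U$, so $S_2$ is always able to imitate a legal $S_1$-move. The base case is immediate from $f_0$ and the empty paths $\lambda_u$.

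For the inductive step, suppose the invariant holds after a finite play ending at $v$ with chromatic state $f$, and extend the play by an edge $e=(v,c,t)$ consistent with $S_2$. When $v\in V_0$ we have $e = S_1(v,f(v))$ by the definition of $S_2$; when $v\in V_1$ the adversary chose $e$. In either case $(v, f(v), e)$ lies in the candidate set for target $t$, so $f'(t)\neq\bot$. To check the invariant for any $t'$ with $f'(t')\neq\bot$, take the triple $(s^*, m^*, e^*)$ selected by the tiebreaking rule at $t'$; by the induction hypothesis applied to $s^*$ there is an $S_1$-consistent path from $U$ to $s^*$ with matching coloring ending in $\mathcal{M}_1$-state $m^*$, and $S_1$-consistency of $e^*$ (from the candidate-set definition) lets us append $e^*$ to obtain a path to $t'$ ending in $\delta(m^*, e^*) = f'(t')$, with coloring extended by $c$ as required.

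The remaining step, and the one I expect to be the main obstacle, is converting the family of finite $S_1$-consistent witnesses into a single infinite $S_1$-consistent play with the same coloring as a given infinite $S_2$-play $\pi$ from $U$. The finite witnesses need not be prefix-consistent across different lengths, because the tiebreaking in $\sigma$ may change the chosen source from one step to the next. I would close this gap by a König's lemma argument: consider the tree of all $S_1$-consistent finite paths from $U$ whose coloring is a prefix of $\col(\pi)$, ordered by the prefix relation; it has finite out-degree (bounded by $|E|$) and, by the invariant, a nonempty level at every depth, hence an infinite branch. That branch is an infinite $S_1$-consistent play from $U$ with $\col = \col(\pi)$, proving $\col(S_2, U)\subseteq \col(S_1, U)$.
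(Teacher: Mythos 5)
Your proposal is correct and follows essentially the same route as the paper's proof: the same state set of functions $f\colon V\to M\cup\{\bot\}$, the same color-driven update via candidate edges with the $S_1$-consistency constraint at Player-0 sources, the same soundness/completeness induction, and the same K\H{o}nig's lemma step to pass from finite witnesses to an infinite $S_1$-consistent play. The ``main obstacle'' you flag (non-prefix-consistency of the finite witnesses) is resolved exactly as the paper resolves it, so there is no gap.
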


It should not be confusing that this theorem does not mention winning conditions. One can notice that $\col(S_1, U)$ is the minimal winning condition w.r.t.~which $S_1$ is winning from all nodes of $U$. So, if $S_1$ and $S_2$ are as in Theorem \ref{thm:first_upper}, and $W$ is a winning condition w.r.t.~which $S_1$ is winning from all nodes of $U$, then $S_2$ is also winning w.r.t.~$W$ from all nodes of $U$. That is, we obtain the following corollary.

\begin{corollary}
\label{col:condition}
Let $W\subseteq C^\omega$ be any winning condition. Then for any $n, q\in\mathbb{Z}^+$ the following holds. Take any $n$-node arena $\mathcal{A}$ and any $q$-state strategy $S_1$ of Player 0 in $\mathcal{A}$. Then in $\mathcal{A}$ there exists a chromatic $(q + 1)^n$-state strategy $S_2$ of Player 0 such that for any node $v$ of $\mathcal{A}$ the following holds: if $S_1$ is winning from $v$ w.r.t.~$W$, then so is $S_2$.
\end{corollary}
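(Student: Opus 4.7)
The plan is to derive this corollary directly from Theorem \ref{thm:first_upper} by choosing the set $U$ appropriately. Specifically, I would set
\[U = \{v \in V : S_1 \text{ is winning from } v \text{ w.r.t. } W\} = \{v \in V : \col(S_1, v) \subseteq W\}.\]
Applying Theorem \ref{thm:first_upper} to this particular $U$ yields a chromatic $(q+1)^n$-state strategy $S_2$ of Player 0 satisfying $\col(S_2, U) \subseteq \col(S_1, U)$.

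The next step is a one-line containment chain: by the definition of $U$, we have $\col(S_1, v) \subseteq W$ for every $v \in U$, hence $\col(S_1, U) = \bigcup_{v \in U} \col(S_1, v) \subseteq W$. Combining this with the conclusion of Theorem \ref{thm:first_upper} gives $\col(S_2, U) \subseteq W$. Since $\col(S_2, v) \subseteq \col(S_2, U)$ for every $v \in U$, we conclude that $S_2$ is winning from every $v \in U$ w.r.t.~$W$, which is exactly the statement of the corollary.

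There is no real obstacle here; the work is entirely absorbed into Theorem \ref{thm:first_upper}. The only thing to be slightly careful about is to apply the theorem to the set $U$ of ``good'' nodes as defined above, rather than to each $v$ individually, so that a single chromatic strategy $S_2$ handles all winning starting nodes simultaneously (otherwise we would get a different $S_2$ per node, which would not match the statement).
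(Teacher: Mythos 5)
Your proposal is correct and matches the paper's own proof exactly: the paper also applies Theorem \ref{thm:first_upper} to the set $U$ of nodes from which $S_1$ is winning w.r.t.~$W$, and the containment chain $\col(S_2, U) \subseteq \col(S_1, U) \subseteq W$ is precisely the (implicit) justification. Your added remark about using a single set $U$ rather than node-by-node applications is a sensible point of care, and nothing is missing.
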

\begin{proof}
Apply Theorem \ref{thm:first_upper} to the set $U$ of nodes from where $S_1$ is winning w.r.t.~$W$.
\end{proof}

Let us now give a proof sketch of Theorem \ref{thm:first_upper}. Its full proof is given in Section \ref{sec:first_upper}

\begin{proof}[Proof sketch of Theorem \ref{thm:first_upper}]
At each moment, $S_2$ stores a function $f\colon V\to M\cup\{\bot\}$, where $M$ is the set of states of $S_1$ and $\bot\notin M$ is a special symbol meaning ``undefined''. There are $(q + 1)^n$ such functions, so $S_2$ will have $(q + 1)^n$ states. Strategy $S_2$ maintains the following two invariants called \emph{soundness} and \emph{completeness}:
\begin{itemize}
    \item (\emph{soundness}) for any $v\in V$ where $f$ is defined, there exists a path from $U$ to $v$ such that, first, this path is colored exactly as our current play with $S_2$, second, this path is consistent with $S_1$, and third, the state of $S_1$ after this path is $f(v)$; 
    
    \item (\emph{completeness}) $f$ is defined in the last node of our current play with $S_2$.
\end{itemize}

To show that $\col(S_2, U)\subseteq \col(S_1, U)$, we have to show that for any infinite play $P_2$ with $S_2$ from $U$ there exists an infinite play $P_1$ with $S_1$ from $U$ such that $P_1$ has the same sequence of colors as $P_2$. An analog of this statement for finite plays easily follows from the soundness and the completeness. Indeed, if we have a finite play $p_2$ with $S_2$, then, by the completeness, $f$ is defined on its endpoint, so we can take a finite play $p_1$ with $S_1$ which establishes the soundness for the endpoint of $p_2$. We then extend this to infinite plays via the Kőnig's lemma, by considering all finite prefixes of $P_2$.

It remains to define the memory structure of $S_2$ (which has to be chromatic) and the next-move function of $S_2$ in such a way that the soundness and the completeness hold.

 We start with the next-move function of $S_2$. Assume that $S_2$ has to make a move from $v\in V$. If $f(v) \neq \bot$, then $S_2$ makes the same move as $S_1$ does from $v$ when its state is $f(v)$. If $f(v) = \bot$, then $S_2$ makes an arbitrary move (however, the completeness invariant ensures that this situation never occurs).
 
 We now define the memory structure of $S_2$. To ensure our invariants in the beginning, we let the initial state of $S_2$ be the function $f_{init}\colon V\to M\cup\{\bot\}$, which maps nodes from $U$ to $m_{init}$ (the initial state of $S_1$) and other nodes to $\bot$. Now, assume that the current state of $S_2$ is represented by a function $f\colon V\to M\cup\{\bot\}$, and then we receive an edge whose color is $c\in C$. Our current play against $S_2$ became one edge longer. We have to define a new state $\widehat{f}\colon V\to M\cup\{\bot\}$ of $S_2$ in some way which preserves the soundness and the completeness. Moreover, since $S_2$ has to be chromatic, $\widehat{f}$ has to be a function of $f$ and the color $c$. 
 
 On a high level, we consider all paths which establish the soundness for $f$, and then try all possible ways to extend them by $c$-colored edges without violating consistency with $S_1$. Of course, we do not store all these paths in our memory, but the knowledge of $f$ is sufficient. For example, assume that we have a $c$-colored edge from a node $u$ of Player 1 with $f(u)\neq \bot$ to a node $v$. Then we can safely define $\widehat{f}(v)$ to be the state into which the memory structure of $S_1$ transits from the state $f(u)$ after reading the edge $(u, c, v)$. Now, if $u$ is a node of Player 0, there is one more restriction: the edge $(u, c, v)$ has to be the move of $S_1$ from $u$ in the state $f(u)$. This is needed to preserve the consistency with $S_1$.
 
For the completeness, we have to ensure that there is a $c$-colored edge allowing us to define $\widehat{f}$ at the endpoint of our current play with $S_2$. It is not hard to see that one such edge is simply the last edge of this play. It is not a problem that we might actually use some other edge to define $
\widehat{f}$ at the last node of our current play -- we should only ensure that $\widehat{f}$ is defined there.

\end{proof}

To obtain an analog of Corollary \ref{col:condition} for preference relations, we establish the following result.

\begin{theorem} 
\label{thm:second_upper}
For any $n, q\in\mathbb{Z}^+$, for any arena $\mathcal{A} = \langle V, V_0, V_1, E\rangle$ with $n$ nodes, for any total preorder $\preceq$ on the set $V$ and for any $q$-state strategy $S_1$ of Player 0 in $\mathcal{A}$, there exists a chromatic $(qn + 1)^n$-state strategy $S_2$ of Player 0 such that for any $v\in V$ we have:
\[\col(S_2, v) \subseteq \bigcup\limits_{u \in V, v \preceq u} \col(S_1, u).\]
\end{theorem}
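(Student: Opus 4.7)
The plan is to adapt the construction from the sketch of Theorem~\ref{thm:first_upper}, now augmenting each recorded $S_1$-state with a ``simulated starting node'' in order to certify which $\preceq$-equivalence class of starting nodes the current $S_2$-play can be attributed to. I would take the memory states of $S_2$ to be functions $f \colon V \to (V \times M) \cup \{\bot\}$, where $M$ is the state set of $S_1$; this gives exactly the required $(qn+1)^n$ count, with the extra factor of $n$ per coordinate accounting for the node component. The intended meaning is that $f(x) = (u, m)$ asserts the existence of a finite path from $u$ to $x$ which is consistent with $S_1$, which is colored as the current $S_2$-play, and after which $S_1$ is in state $m$. The initial memory sends every $v$ to $(v, m_{init})$, and the next-move rule is the natural one: at a Player-$0$ node $v$ with $f(v) = (u, m) \neq \bot$, $S_2$ copies the move $S_1(v, m)$.

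For the chromatic transition on a color $c$, I would update each coordinate $\widehat{f}(x)$ by considering all predecessors $y$ of $x$ via $c$-colored edges such that $f(y) = (u, m) \neq \bot$ and, whenever $y \in V_0$, the edge $(y, c, x)$ equals $S_1(y, m)$. Among these eligible candidates I would pick one whose starting node $u$ is $\preceq$-maximal, with residual ties broken by a fixed rule, and then set $\widehat{f}(x) = (u, \delta(m, (y, c, x)))$; if there is no eligible candidate, $\widehat{f}(x) = \bot$. Since this transition depends only on $f$ and $c$, the memory is chromatic.

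I would then verify two invariants parallel to those in Theorem~\ref{thm:first_upper}. The first, \emph{soundness}, is that whenever $f(x) = (u, m) \neq \bot$ there exists an $S_1$-consistent finite path from $u$ to $x$ colored as the current $S_2$-play and ending in $S_1$-state $m$; this follows from the update rule just as in Theorem~\ref{thm:first_upper}. The second, \emph{strengthened completeness}, is that if the $S_2$-play started at $v$ and currently sits at $x$, then $f(x) = (u, m) \neq \bot$ for some $u$ with $v \preceq u$. Preservation of this invariant is the step I expect to be the main obstacle, and is exactly what the tie-breaking is designed to enforce: the crucial observation is that if the previous endpoint $z$ had $f(z) = (u, m)$ with $v \preceq u$, then after the next move of color $c$ to $w$ the pair rooted at $z$ is itself eligible in the definition of $\widehat{f}(w)$ -- if $z \in V_0$ then the next-move rule ensured $(z, c, w) = S_1(z, m)$, while if $z \in V_1$ no consistency check applies. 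Since the update selects a $\preceq$-maximal eligible starting node $u'$, we get $u \preceq u'$, and hence $v \preceq u'$ by transitivity.

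Finally, I would lift the finite-prefix picture to infinite plays via Kőnig's lemma. Fixing an infinite $S_2$-play from $v$, I would form the finitely-branching tree whose vertices at depth $k$ are the $S_1$-consistent finite paths of length $k$ whose source $u$ satisfies $v \preceq u$ and whose coloring matches the first $k$ colors of the $S_2$-play. Soundness together with strengthened completeness ensures a vertex at every depth, so Kőnig's lemma produces an infinite branch; this branch is an infinite $S_1$-consistent play from some $u$ with $v \preceq u$ whose coloring equals that of the original $S_2$-play, witnessing $\col(S_2, v) \subseteq \bigcup_{u \in V,\, v \preceq u} \col(S_1, u)$.
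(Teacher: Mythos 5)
Your proposal is correct and follows essentially the same route as the paper's proof: the same state space of functions $f\colon V\to (V\times M)\cup\{\bot\}$, the same soundness invariant, the same strengthened completeness invariant tracking a $\preceq$-lower bound on the simulated starting node, the same $\preceq$-maximizing update over eligible ($(f,v,c)$-good) edges, the key observation that the edge actually played is itself eligible at the new endpoint, and the Kőnig's lemma lift to infinite plays.
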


\begin{corollary}
\label{col:pref}
Let $\sqsubseteq$ be any preference relation on $C^\omega$. Then for any $n, q\in\mathbb{Z}^+$ the following holds. Take any $n$-node arena $\mathcal{A}$ and any $q$-state strategy $S_1$ of Player 0 in $\mathcal{A}$. Then there exists a chromatic $(qn + 1)^n$-state strategy $S_2$ of Player 0 such that, for any node $v$ of $\mathcal{A}$, we have that $S_2$ is at least as good as $S_1$ w.r.t.~$\sqsubseteq$ from $v$.
\end{corollary}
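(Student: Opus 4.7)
The plan is to reduce Corollary~\ref{col:pref} to Theorem~\ref{thm:second_upper} by extracting from $\sqsubseteq$ and $S_1$ a suitable total preorder on the node set $V$. I define $\preceq$ on $V$ by declaring $u \preceq v$ if and only if for every $\beta \in \col(S_1, v)$ there exists $\alpha \in \col(S_1, u)$ with $\alpha \sqsubseteq \beta$; intuitively, $u \preceq v$ says that from $u$ the strategy $S_1$ is able to match from below (with respect to $\sqsubseteq$) any play that $S_1$ produces from $v$. Reflexivity is immediate by taking $\alpha = \beta$, and transitivity is a routine composition of witnesses: if $u \preceq v \preceq w$ and $\beta \in \col(S_1, w)$, pick $\gamma \in \col(S_1, v)$ with $\gamma \sqsubseteq \beta$, then $\alpha \in \col(S_1, u)$ with $\alpha \sqsubseteq \gamma$.

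The key step is to verify that $\preceq$ is total. Assume $u \not\preceq v$, so that there is some $\beta^{*} \in \col(S_1, v)$ with $\alpha \not\sqsubseteq \beta^{*}$ for every $\alpha \in \col(S_1, u)$. Here I use that $\sqsubseteq$ is a \emph{total} preorder on $C^\omega$: $\alpha \not\sqsubseteq \beta^{*}$ then forces $\beta^{*} \sqsubseteq \alpha$. Hence every $\beta \in \col(S_1, u)$ satisfies $\beta^{*} \sqsubseteq \beta$, and so the single element $\beta^{*} \in \col(S_1, v)$ is a uniform witness that $v \preceq u$.

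Having established that $\preceq$ is a total preorder on $V$, I apply Theorem~\ref{thm:second_upper} to it and obtain a chromatic $(qn+1)^n$-state strategy $S_2$ of Player 0 with $\col(S_2, v) \subseteq \bigcup_{u \in V,\, v \preceq u} \col(S_1, u)$ for every $v \in V$. To conclude, fix $v \in V$ and any $\beta \in \col(S_2, v)$. The inclusion yields some $u \in V$ with $v \preceq u$ and $\beta \in \col(S_1, u)$, and unfolding the definition of $v \preceq u$ applied to this particular $\beta$ produces $\alpha \in \col(S_1, v)$ with $\alpha \sqsubseteq \beta$. This is precisely the condition that $S_2$ is at least as good as $S_1$ from $v$.

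The main obstacle is totalness of $\preceq$, which crucially relies on $\sqsubseteq$ being total; the other steps are a direct composition of definitions with Theorem~\ref{thm:second_upper}, so the bound $(qn+1)^n$ is inherited with no loss.
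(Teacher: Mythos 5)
Your proposal is correct and matches the paper's own derivation of Corollary~\ref{col:pref}: the same preorder $u \preceq v$ defined via matching plays of $S_1$, the same totality argument from the totality of $\sqsubseteq$, and the same final unfolding after applying Theorem~\ref{thm:second_upper}. No issues.
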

Let us discuss why do we need slightly more states in Corollary \ref{col:pref} than in Corollary \ref{col:condition}. The reason is that we want $S_2$ to be as good as $S_1$ from \emph{every} node of $\mathcal{A}$. When we were dealing with winning condition, we could just forget about the nodes where $S_1$ is not winning. Now there is a finer classification of the nodes, depending on  what Player 0 can achieve in these nodes w.r.t.~$\sqsubseteq$, and it is slightly harder to deal with this classification.

Let us now formally derive Corollary \ref{col:pref} from Theorem \ref{thm:second_upper}.
\begin{proof}[Proof of Corollary \ref{col:pref}]
We write $u\preceq v$ for two nodes $u, v$ of $\mathcal{A}$ if for any $\beta\in\col(S_1, v)$ there exists $\alpha\in\col(S_1, u)$ such that $\alpha\sqsubseteq \beta$. Let us verify that $\preceq$ is a total preorder on the set $V$ of nodes of $\mathcal{A}$. The transitivity of $\preceq$ follows from the transitivity $\sqsubseteq$. The reflexivity of $\preceq$ is obvious. Now we show the totality of $\preceq$, that is, we show that $u\not\preceq v\implies v\preceq u$. Since $u\not\preceq v$, there exists $\beta\in \col(S_1, v)$ such that $\alpha\not \sqsubseteq \beta$ for every $\alpha\in \col(S_1, u)$. By the totality of $\sqsubseteq$, we have $\beta \sqsubseteq \alpha$ for every $\alpha\in\col(S_1, u)$. This implies that $v\preceq u$.

We then apply Theorem \ref{thm:second_upper} to $\preceq$. Consider the resulting chromatic $(qn + 1)^n$-state strategy $S_2$.  We show, for every $v\in V$, that $S_2$ is at least as good as $S_1$ w.r.t.~$\sqsubseteq$ from $v$. That is, we show, for every $\beta\in \col(S_2, v)$, that there exists $\alpha\in \col(S_1, v)$ with $\alpha\sqsubseteq\beta$.
 By the conclusion of Theorem \ref{thm:second_upper}, we have that $\beta$ belongs to $\col(S_1, u)$ for some $v\preceq u$. By definition of $\preceq$, there exists some $\alpha\in \col(S_1, v)$ such that $\alpha \sqsubseteq\beta$, as required.
\end{proof}

Next we provide a proof sketch of Theorem \ref{thm:second_upper}, whose full proof is given in Section \ref{sec:second_upper}.
\begin{proof}[Proof sketch of Theorem \ref{thm:second_upper}]
 This time, $S_2$ will store slightly more information than in the proof of Theorem \ref{thm:first_upper}. Namely, at each moment, $S_2$ stores a function of the form $f\colon V\to V\times M \cup\{\bot\}$ (observe that this requires $(qn + 1)^n$ states). That is, whenever $f$ is defined at $v$, the value $f(v)$ is an ordered pair in which the first coordinate is a node and the second coordinate is a state of $S_2$. We introduce the first coordinate to strengthen the soundness condition: a path which establishes the soundness at $v$ must be from the first coordinate of $f(v)$. More precisely, the soundness condition is formulated as follows:

\begin{itemize}
    \item (\emph{soundness}) for any $v\in V$ where $f(v) = (u, m)\neq \bot$ is defined, there exists a path from $u$ to $v$ such that, first, this path is colored exactly as our current play with $S_2$, second, this path is consistent with $S_1$, and third, the state of $S_1$ after this path is $m$; 
\end{itemize}
To take into account the preorder $\preceq$, we modify the completeness condition as follows:
\begin{itemize}
\item (\emph{completeness}) $f$ is defined at the last node of our current play with $S_2$; moreover, the first coordinate of $f$ at this node is at least as large w.r.t.~$\preceq$ as the starting node of our play with $S_2$. 
\end{itemize}

These two conditions imply that
 for any infinite play $P_2$ with $S_2$ there exists an infinite play $P_1$ with $S_1$ such that, first, $\col(P_2) = \col(P_1)$, and second, $\source(P_2) \preceq \source(P_1)$ (again, we first derive this for finite plays, and then extend to infinite plays via the Kőnig's lemma). In turn, this obviously implies that 
 \[\col(S_2, v) \subseteq \bigcup\limits_{u \in V, v \preceq u} \col(S_1, u).\]
 
 It remains to define $S_2$ satisfying the soundness and the completeness. We give essentially the same
 definition as in the proof of Theorem \ref{thm:first_upper}. The only difference is that we have to additionally care about the first coordinates of $f$. For example, assume that our current state is $f$, and then we have to determine the new state $\widehat{f}$ at a node $v$. If there is more than one edge of the form $(u, c, v)$ allowing to define $\widehat{f}$ at $v$, we choose one maximizing the first coordinate of $f(u)$ w.r.t.~$\preceq$ (and we set the first coordinate of $\widehat{f}(v)$  to be equal to this maximum). This ensures that the value of the first coordinate of $f(v_{cur})$, where $v_{cur}$ is the last node in our current play with $S_2$, can only increase over time w.r.t.~$\preceq$. In particular, it is always at least as large as the starting node of our play, since in the beginning we set $f_{init}(v) = (v, m_{init})$ for every $v\in V$. This easily establishes our modified completeness condition, while the soundness requires almost no new argument.

\end{proof}

Finally, the exact statement of our lower bound showing the tightness of Theorem \ref{thm:first_upper} is the following:

\begin{theorem}
\label{thm:lower_bound}
For any $n, q\in\mathbb{Z}^+$ there exists an arena $\mathcal{A}$ with $n + 3$ nodes, a node $u$ of $\mathcal{A}$ and a $q$-state strategy $S_1$ of Player 0 in $\mathcal{A}$  such that for any chromatic $Q$-state  strategy $S_2$ of Player 0 it holds that $\col(S_2, u) \subseteq \col(S_1, u) \implies Q\ge q^n$.
\end{theorem}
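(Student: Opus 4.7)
The plan is to construct an arena $\mathcal{A}$ with $n+3$ nodes and a $q$-state strategy $S_1$ so that a distinguishing argument forces any compliant chromatic $S_2$ to have at least $q^n$ states. The arena will have $n$ ``response'' nodes $v_1, \ldots, v_n$ controlled by Player 0, a starting node $u$, a Player 1 ``dispatcher'' that can route the play to any $v_i$, and a Player 0 ``return'' node sending play back to $u$, for a total of $n+3$ nodes. A critical feature of the design is that all edges \emph{entering} any $v_i$ share a common color, so a chromatic memory cannot detect from colors alone which $v_i$ has been entered. At each $v_i$ Player 0 has $q$ outgoing edges, each colored by a distinct element of a set $\{c_1,\ldots,c_q\}$. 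The strategy $S_1$ is defined through a $q$-state memory structure: at $v_i$ in internal state $s\in\{1,\ldots,q\}$, $S_1$ picks the outgoing edge of color $\phi(i,s)$ for a carefully chosen $\phi\colon\{1,\ldots,n\}\times\{1,\ldots,q\}\to\{c_1,\ldots,c_q\}$, and the state transitions are edge-dependent, so $S_1$ can realize $nq$ distinct local behaviors using only $q$ internal states.

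The next step is a distinguishing argument. I would identify $q^n$ finite plays $\pi_{\mathbf{k}}$ in $\mathcal{A}$ indexed by $\mathbf{k}\in\{1,\ldots,q\}^n$, all starting from $u$, all consistent with $S_1$, and all ending at a common node $v^*$. The key combinatorial property to establish is that for every pair $\mathbf{k}\ne\mathbf{k}'$ there is a Player 1 continuation $\rho$ from $v^*$ such that playing $\rho$ against $S_1$ after $\pi_{\mathbf{k}}$ yields a color sequence in $\col(S_1,u)$, while playing the \emph{same} continuation after $\pi_{\mathbf{k}'}$, together with the \emph{same} chromatic response (as would be forced on $S_2$ were its memory state identical on the two prefixes), yields a color sequence outside $\col(S_1,u)$. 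From this, any chromatic $S_2$ with $\col(S_2,u)\subseteq\col(S_1,u)$ must end up in distinct memory states after reading the color prefixes $\col(\pi_{\mathbf{k}})$ and $\col(\pi_{\mathbf{k}'})$, because otherwise its deterministic response on one side would violate the containment. A pigeonhole argument over the $q^n$ prefixes then gives $Q\ge q^n$.

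The main obstacle is the joint design of $\mathcal{A}$ and $S_1$ under two opposing requirements. On one hand, $S_1$ truly must need only $q$ states: its memory carries at most $\log q$ bits and so cannot simultaneously encode the full $n\log q$-bit configuration $\mathbf{k}$. On the other hand, the $q^n$ color prefixes $\col(\pi_{\mathbf{k}})$ must be pairwise distinguishable by future Player 1 moves, so no chromatic collapse can be tolerated. The intended resolution is to let $S_1$ exploit the current node --- recoverable from the graph structure but not from colors, once the incoming edges to the $v_i$'s are uniformly colored --- to reconstitute exactly the piece of $\mathbf{k}$ it needs at the moment of response, while the chromatic memory, blind to the current node, is forced to track all $q^n$ configurations preemptively. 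Fitting this construction into only $n+3$ nodes, so that the auxiliary hub is shared across all $n$ cells without accidentally handing the chromatic memory any usable information (for example via nontrivial colors on returning edges), is the delicate combinatorial part of the argument.
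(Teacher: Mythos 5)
Your high-level plan --- a fooling-set argument in which the colors hide which node the play is at, so that $S_1$ can exploit edge information that a chromatic memory cannot see --- is indeed the right idea and matches the spirit of the paper's proof. But the proposal stops exactly where the difficulty begins, and the design choices you do commit to point in a direction that does not work. The central unaddressed issue is this: the hypothesis $\col(S_2,u)\subseteq\col(S_1,u)$ constrains $S_2$ only through \emph{color sequences}. After a color history $h$, the set of responses available to $S_2$ without violating the containment is $A(h)=\{r : hr \text{ is a prefix of some element of } \col(S_1,u)\}$, and this set is computed over \emph{all} $S_1$-consistent plays whose coloring is $h$ --- not just over continuations of the particular play $\pi_{\mathbf k}$ you have in mind. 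In your design you make all edges entering the $v_i$'s share one color and give every $v_i$ the same outgoing palette $\{c_1,\dots,c_q\}$, so a single color history $h$ is realized by $S_1$-consistent plays sitting at many different $v_j$'s in many different $S_1$-states; consequently $A(h)$ typically contains $\phi(j,s)$ for many pairs $(j,s)$ and is nowhere near a singleton. Then two prefixes $\col(\pi_{\mathbf k})$, $\col(\pi_{\mathbf k'})$ on which $S_2$'s memory agrees need not force conflicting responses, and no contradiction is obtained. The paper's proof lives or dies on the opposite property: its Lemma \ref{lemma:paths}(b) guarantees that \emph{every} path from $u$ with a given coloring $zwz$ ends at the unique Player-0 node $t$ with the \emph{same} counter value $f(w)$, so that Corollary \ref{cor:prefix} pins the required next color down to exactly one of $c,d$. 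You would need an analogous unambiguity lemma, and your stated construction actively destroys it.

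Two further gaps. First, realizability against $S_2$: your fooling prefixes $\pi_{\mathbf k}$ pass through Player-0 nodes, so they are consistent with $S_1$ but not automatically with an arbitrary chromatic $S_2$; a Myhill--Nerode argument on $S_2$'s memory only applies to color words that actually arise in plays against $S_2$. (The paper sidesteps this by routing the entire fooling prefix through Player-1 nodes, with a single Player-0 node $t$ reached only at the final, distinguishing step.) This is repairable if every intermediate response is forced, but that again requires the unambiguity you have not established. Second, the actual combinatorics --- exhibiting $q^n$ prefixes and, for each pair, a continuation on which the forced responses differ, while $S_1$ still computes its move with $q$ states --- is the entire content of the theorem; in the paper this is the function $f$ (``number of $y$'s after the $(n{+}1)$st occurrence of $x$ from the right, mod $q$''), the words $xy^{i_1}\cdots xy^{i_n}$, and the separating suffixes $x^ky^r$ of Lemma \ref{lemma:word}. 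Labelling this ``the delicate combinatorial part'' and deferring it means the proof is not yet there: as it stands the proposal is a plausible plan whose committed details would need to be substantially reworked (in particular, moving the $n$ hidden nodes under Player 1's control and concentrating Player 0's choice at a single node) before it could be completed.
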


Our argument has a connection to a work of Jirásková and Pighizzini  
\cite{jiraskova2011optimal} on \emph{self-verifying} automata. It turns out that from one of their results one can directly derive a weaker version of Theorem \ref{thm:lower_bound}. Namely, one can get an arena with $n + O(1)$ nodes and a 2-state strategy $S_1$ of Player 0 in it such that for some node $u$ of this arena the following holds: if $S_2$ is a chromatic $Q$-state strategy with $\col(S_2, u)\subseteq \col(S_1, u)$, then $Q = \Omega(3^{n/2})$. We show this derivation below in this section. Of course, when $S_1$ has 2 states, Theorem 
\ref{thm:lower_bound} gives a better bound $Q = \Omega(2^n)$, let alone that $q$ can be arbitrary in Theorem \ref{thm:lower_bound}). In fact, for the strong version of Theorem \ref{thm:lower_bound} it is not sufficient to use results of Jirásková and Pighizzini as a black box -- we have to slightly modify their construction. The full proof of Theorem \ref{thm:lower_bound} is given in Section \ref{sec:lower_bound}. 

\begin{proof}[Sketch of the proof of Theorem \ref{thm:lower_bound} (weak version)]

A self-verifying automaton is a non-deterministic finite automaton with a property that, for any input word $w$, exactly one of the following two statements holds:
\begin{itemize}
    \item there exists a computation which accepts $w$;
    \item there exists a computation  which rejects $w$.
\end{itemize}

For every $n\in\mathbb{N}$, Jirásková and Pighizzini construct a self-verifying automaton $A_n$ with $n$ states such that any deterministic automaton, recognizing the same language as $A_n$, has $\Omega(3^{n/2})$ states (they also show that this bound is tight). Using $A_n$, we construct an arena with $n + O(1)$ nodes and a 2-state strategy $S_1$ of Player 0 in it such that for some node $u$ of this arena the following holds: if $S_2$ is a chromatic $Q$-state strategy with $\col(S_2, u)\subseteq \col(S_1, u)$, then $Q = \Omega(3^{n/2})$. Namely, consider the transition graph of $A_n$; it can be viewed as an arena with edges colored by the input letters of $A_n$. Assume that Player 1 is the one to move everywhere in this transition graph. Now, add a node $t$ controlled by Player 0. Draw edges to $t$ from all accepting and rejecting states of $A_n$. Color all these edges into a single color $\#$. Finally, take two colors $c$ and $d$ that do not belong to the input alphabet of $A_n$, and draw two edges from $t$ to the initial state of $A_n$, one colored by $c$ and the other one by $d$. We define $u$ as a node corresponding to the initial state of $A_n$.

Consider the following strategy $S_1$ of Player 0. If we come to $t$ from an accepting state, then $S_1$ goes to the initial state by the $c$-colored edge. Otherwise, $S_1$ uses the $d$-colored edge.
Note that $S_1$ is a 2-state strategy -- it just has to remember, whether the last node in a play was some accepting state of $A_n$.

We now show that from any chromatic strategy $S_2$ with $\col(S_2, u)\subseteq\col(S_1, u)$ one can extract a deterministic finite automaton recognizing the language of $A_n$  (below, we denote this language by $L(A_n)$). This means that any such $S_2$ must have $\Omega(3^{n/2})$ states.

 First, notice that, for every word $w$,  Player 1 has a path to $t$ which is colored by $w\#$. Indeed, for any $w$ there is a computation over $w$ which brings either to an accepting or to a rejecting state, from where we can go to $t$ by a $\#$-colored edge. Now, since $A_n$ is self-verifying, we have the following. For $w\in L(A_n)$, there is no $w\#$-colored path to $t$ which goes through a rejecting state (otherwise it would give a computation which rejects $w$). Similarly, for  $w\notin L(A_n)$, there is no $w\#$-colored path to $t$ which goes through an accepting state. This means that if $w\#$ is a prefix of some infinite sequence from $\col(S_1, u)$, then it is followed by $c$ in this sequence if and only if $w\in L(A_n)$. Thus, if $S_2$ is chromatic and  $\col(S_2, u)\subseteq \col(S_1, u)$, then its memory structure, for any $w_1\in L(A_n), w_2\notin L(A_n)$, must come into different states on $w_1$ and on $w_2$. Indeed, otherwise it makes the same move after $w_1\#$ and after $w_2\#$.

\end{proof}

\section{Proof of Theorem \ref{thm:first_upper}}
\label{sec:first_upper}

Let $\mathcal{M} = \langle M, m_{init}, \delta\rangle$ be the memory structure of $S_1$. We have that $|M| = q$.
The set of states of $S_2$ will be the set of functions $f\colon V\to M\cup\{\bot\}$, where $\bot\notin M$. Thus, $S_2$ will be a $(q + 1)^n$-state strategy. The initial state of $S_2$ is the function $f_{init}\colon V\to M\cup\{\bot\}$, 
\[f_{init}(v) = \begin{cases}m_{init} & v\in U,\\ \bot & \mbox{otherwise}\end{cases}\]

We will define $S_2$ in such a way that for any finite path $p$ the following holds. Assume that $p$ is consistent with $S_2$ and $\source(p) \in U$. Let $f\colon V\to M\cup\{\bot\}$ be the state of $S_2$ after $p$. Then we have the following two properties called \emph{soundness} and \emph{completeness}:
\begin{itemize}
\item \emph{(soundness)} for any $v\in V$, if $f(v) = m \neq \bot$, then there exists a finite path $p_1$ with $\source(p_1)\in U, \target(p_1) = v$, such that, first, $p_1$ is consistent with $S_1$, second, $\col(p_1) = \col(p)$, and third,  $\delta(m_{init}, p_1) = m$.
\item \emph{(completeness)} $f(\target(p)) \neq \bot$.
\end{itemize}

Let us first show that for any $S_2$ with these properties we have $\col(S_2, U) \subseteq \col(S_1, U)$. For that it is sufficient to establish the following. Let $P$ be an arbitrary infinite path such that $P$ is consistent with $S_2$ and $\source(P) \in U$. Then there exists an infinite path $P_1$ such that $P_1$ is consistent with $S_1$, $\source(P_1) \in U$ and $\col(P_1) = \col(P)$.

Take an arbitrary $v\in V$. Consider an infinite tree of all finite paths from $v$ that are consistent with $S_1$. Now, delete from this tree all paths that are inconsistent with the coloring of $P$. That is, we delete a path $q$ if $\col(q)\neq\col(p)$, where $p$ is a prefix of $P$ with $|p| = |q|$. Let the resulting tree be $T_v$.

It is sufficient to show that for some $v\in U$, there is an infinite branch in $T_v$. By Kőnig's lemma, we have this as long as there exists $v\in U$ such that $T_v$ is infinite (since we consider only finite arenas, $T_v$ has finite branching for every $v$). To show this, we show that for any $k\in\mathbb{Z}^+$ there exists $v\in U$ such that $T_v$ has a node of depth $k$. Indeed, let $p$ be a prefix of $P$ of length $k$. Since $P$ is consistent with $S_2$, so is $p$. Moreover, $\source(p) = \source(P) \in U$. Let $f\colon V\to M\cup\{\bot\}$ be the state of $S_2$ after reading $p$. By the completeness property, we have $f(\target(p)) = m \neq \bot$. By the soundness property, there exists a finite path $p_1$ with $\source(p_1)\in U$ such that $p_1$ is consistent with $S_1$ and $\col(p_1) = \col(p)$. Observe then that $p_1$ is a depth-$k$ node of $T_{\source(p_1)}$.

\medskip

We now show how to define $S_2$ with properties as above. We first describe the transition function of the memory structure of $S_2$. This memory structure has to be chromatic. So when its transition function receives an edge, it will only use the color of this edge to produce a new state.

Assume the current state of this memory structure is $f\colon V\to M\cup\{\bot\}$, and it receives an edge whose color is $c\in C$. We determine the new state $g\colon V\to M\cup\{\bot\}$ according to the following algorithm. To determine $g(v)$ for $v\in V$, we introduce a notion of a $(f, v, c)$-good edge. An edge $e\in E$ is $(f, v, c)$-good if
\begin{align}
\label{fvc1}
&\mbox{$\target(e) = v$, $\col(e) = c$ and $f(\source(e)) \neq \bot$;}\\
\label{fvc2}
&\mbox{if $\source(e) \in V_0$, then $e = S_1\big(\source(e), f(\source(e))\big)$.}
\end{align}
If no $(f, v, c)$-good edge exists, we set $g(v) = \bot$. Otherwise, we take an arbitrary $(f, v, c)$-good edge $e$ and set $g(v) = \delta\big(f(\source(e)), e\big)$.

We now describe the next-move function of $S_2$. Consider an arbitrary state $f\colon V\to M\cup\{\bot\}$ of $S_2$ and an arbitrary node $v\in V_0$. Define $S_2(v, f)$ as follows. Assume first that $f(v) \neq \bot$. Then set $S_2(v, f) = S_1\big(v, f(v)\big)$. If $f(v) = \bot$, define $S_2(v, f)$ arbitrarily.

\medskip

Definition of $S_2$ is finished. It remains to verify that it satisfies the soundness and the completeness properties. We show this by induction on the length of $p$.

We start with the induction base.
Assume that $p$ is a $0$-length path (then it is automatically consistent with any strategy) and that $\source(p) \in U$. We have to check the soundness and the completeness properties for $p$ and for the initial state $f_{init}$. Let us start with the soundness. If $f_{init}(v) \neq \bot$, then, by definition, $v\in U$ and $f_{init}(v) = m_{init}$. Therefore, we can set $p_1 = \lambda_v$. As for the completeness, we have $f_{init}(\source(p))\neq \bot$ because $\source(p)\in U$.

We now perform the induction step.
Assume that we have verified the soundness and the completeness properties for all paths of length $k$. We extend this to paths of length $k + 1$. Consider  any path $p =p^\prime e^\prime$ of length $k + 1$. Here $e^\prime\in E$ is the last edge of $p$ so that $p^\prime$ is of length $k$. Assume that $p$ is consistent with $S_2$ and $\source(p) \in U$. Then $p^\prime$ is also consistent with $S_2$ and $\source(p^\prime) = \source(p)\in U$. Let $f$ be the state of $S_2$ after $p^\prime$. By the induction hypothesis, we have that the soundness and the completeness hold for $p^\prime$ and $f$. Now, let $g\colon V\to M\cup\{\bot\}$ be the state of $S_2$ after $p$. Alternatively, $g$ is the state into which the memory structure of $S_2$ transits from the state $f$ when it receives $e^\prime$. Let $c = \col(e^\prime)$ be the color of $e^\prime$.

We first show that $p$ and $g$ satisfy the soundness property (see Figure \ref{soundness}).
\begin{figure}[h!]
\centering
  \includegraphics[width=0.8\textwidth]{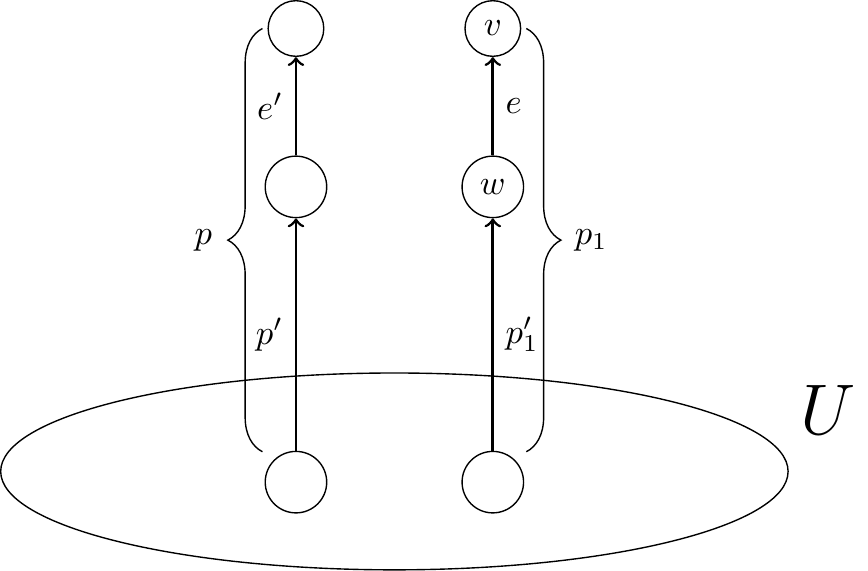}
  \caption{The argument for the soundness.}
\label{soundness}
\end{figure}

 Consider any $v\in V$ such that $g(v)\neq \bot$. There must be an $(f, v, c)$-good edge. Let $e$ be an $(f, v, c)$-good edge which was used to determine $g(v)$. Denote $w = \source(e)$. By \eqref{fvc1}, we have $f(w) \neq \bot$. Hence, by the soundness for $p^\prime$ and $f$, there exists a finite path $p_1^\prime$ with $\source(p_1^\prime)\in U, \target(p_1^\prime) = w$, such that \emph{\textbf{(a)}} $p_1^\prime$ is consistent with $S_1$; \emph{\textbf{(b)}} $\col(p^\prime_1) = \col(p^\prime)$; \emph{\textbf{(c)}} $\delta(m_{init}, p_1^\prime) = f(w)$. Set $p_1 = p_1^\prime e$. Since $\target(p_1^\prime) = w = \source(e)$, we have that $p_1$ is a path. We show that $p_1$ verifies the soundness property for $g(v)$. Obviously, $\source(p_1) = \source(p_1^\prime)\in U$. Since $e$ is $(f, v, c)$-good, we have by \eqref{fvc1} that $\target(e) = v$. Hence, $\target(p_1) = \target(e) = v$. Let us now check that $p_1$ is consistent with $S_1$. This is obvious if $w = \target(p_1^\prime) \in V_1$, because $p_1^\prime$ is consistent with $S_1$. Now, if $w = \target(p_1^\prime)\in V_0$, we have to show that $e = S_1(p_1^\prime)$. Since $f(w) = \delta(m_{init}, p_1^\prime)$ is the state of $S_1$ after $p_1^\prime$, we have $S_1(p_1^\prime) = S_1(w, f(w))$. In turn, since $e$ is $(f, v, c)$-good, by \eqref{fvc2} we have $S_1(w, f(w)) = S_1\big(\source(e), f(\source(e))\big) = e$. It remains to show that $\col(p_1) = \col(p)$ and $\delta(m_{init}, p_1) = g(v)$. Indeed, $\col(p_1) = \col(p_1^\prime e) = \col(p_1^\prime)\col(e) = \col(p^\prime) c = \col(p^\prime) \col(e^\prime) = \col(p^\prime e^\prime) = \col(p)$. Here we use a fact that $\col(e) = c$ due to \eqref{fvc1}. In turn, $\delta(m_{init}, p_1) = \delta(m_{init}, p_1^\prime e) = \delta(\delta(m_{init}, p_1^\prime), e) = \delta(f(w), e)$. It remains to recall that by definition, $g(v) = \delta(f(\source(e)), e) = \delta(f(w), e)$.

Now we show that $p$ and $g$ satisfy the completeness property. In other words, we show that $g(\target(p)) \neq \bot$. By definition, this holds as long as there exists an $(f, \target(p), c)$-good edge. We claim that $e^\prime$, the last edge of $p$, is $(f, \target(p), c)$-good. Let us first verify that $e^\prime$ satisfies \eqref{fvc1}. Obviously, $\target(e^\prime) = \target(p)$. Now, $\col(e^\prime) = c$ by definition. Finally, we have $f(\source(e^\prime)) = f(\target(p^\prime)) \neq \bot$ due to the completeness property for $p^\prime$ and $f$.
Let us now check that $e^\prime$ satisfies \eqref{fvc2}. Assume that $\source(e^\prime) = \target(p^\prime) \in V_0$. Since $p$ is consistent with $S_2$, we have $e^\prime = S_2(p^\prime)$. Now, by definition, $f$ is the state of $S_2$ after $p^\prime$. Therefore, $e^\prime = S_2(p^\prime) =  S_2(\target(p^\prime), f) = S_2(\source(e^\prime), f)$. Again, since the completeness property holds for $p^\prime$ and $f$, we have $f(\source(e^\prime)) = f(\target(p^\prime)) \neq \bot$. Hence, by definition of $S_2$, we have that $e^\prime = S_2(\source(e^\prime), f) = S_1\big(\source(e^\prime), f(\source(e^\prime))\big)$. Thus, \eqref{fvc2} is established for $e^\prime$.

\section{Proof of Theorem \ref{thm:second_upper}}
\label{sec:second_upper}
Let $\mathcal{M} = \langle M, m_{init}, \delta\rangle$ be the memory structure of $S_1$. We have that $|M| = q$. The set of states of $S_2$ will be the set of functions $f\colon V\to V\times M \cup\{\bot\}$, where $\bot\notin V\times M$. Thus, $S_2$ is a $(qn + 1)^n$-state strategy. The initial state of $S_2$ is the function $f_{init}\colon V\to V\times M \cup\{\bot\}$, defined by $f(v) = (v, m_{init})$ for every $v\in V$.

We use the following notation in the proof. Take any $f\colon V\to V\times M\cup\{\bot\}$ and $v\in V$, and assume that $f(v) = (u, m)\neq \bot$. Then we set $f_1(v) = u$ and $f_2(v) = m$. That is, $f_1$ is the projection of $f$ to the first coordinate (its values are nodes of our arena) and $f_2$ is the projection of $f$ to the second coordinate (its values are states of $S_1$). If $f(v) = \bot$, we set $f_1(v) = f_2(v) = \bot$. 

Our goal is to define $S_2$ in a such a way that, for any finite path $p$ which is consistent with $S_2$, and for the state $f$ of $S_2$ after $p$, the following holds:
\begin{itemize}
\item \emph{(soundness)} for any $v\in V$, if $f(v) \neq \bot$, then there exists a finite path $p_1$ from $f_1(v)$ to $v$ such that, first, $p_1$ is consistent with $S_1$, second, $\col(p_1) = \col(p)$, and third,  $\delta(m_{init}, p_1) = f_2(v)$.
\item \emph{(completeness)} $f(\target(p)) \neq \bot$ and $\source(p) \preceq f_2(v)$.
\end{itemize}
It is not hard to see that for any $S_2$ with these properties we have
\[\col(S_2, v) \subseteq \bigcup\limits_{u\in V, v\preceq u} \col(S_1, u).\]
Indeed, to establish this, we have to show that for any infinite path $P$ which is consistent with $S_2$ there exists an infinite path $P_1$ which is consistent with $S_1$ such that $\col(P) = \col(P_1)$ and $\source(P)\preceq\source(P_1)$. For this we define the trees $T_v$, $v\in V$ as in the proof of Theorem \ref{thm:first_upper}. By Kőnig's lemma, it is sufficient to show that $T_u$ is infinite for some $u$ with  $\source(P) \preceq u$.  We take an arbitrary finite prefix $p$ of $P$. Since $p$ is consistent with $S_2$, from the completeness we get that $f(\target(p))\neq \bot$. By applying the soundness to the node $\target(p)$, we get a finite path $p_1$ from $f_1(\target(p))$ to $\target(p)$ such that, first, $p_1$ is consistent with $S_1$, and second, $\col(p_1) = \col(p)$. Hence, $p_1$ is a node of $T_{f_1(\target(p))}$. Moreover, by the completeness we have $\source(P) = \source(p) \preceq f_1(\target(p))$.
Thus, for some $u$ with $\source(P) \preceq u$ there is a node of depth $|p|$ in $T_u$. It remains to note that $|p|$ can be arbitrarily large.

\medskip

We now explain how to define $S_2$ in a way which guaranties the soundness and the completeness properties.
We start with the transition function of $S_2$. Assume that the current state of $S_2$ is $f\colon V\to V\times M \cup \{\bot\}$, and then it receives an edge whose color is $c$. We define the new state $g\colon V\to V\times M \cup \{\bot\}$ as follows (we stress that $S_2$ has to be chromatic, so $g$ will be a function of $f$ and $c$). Take any $v\in V$ for which we want to determine $g(v)$. Note that $f_2\colon V\to M\cup\{\bot\}$. If there is no $(f_2, v, c)$-good edge, in a sense of (\ref{fvc1}--\ref{fvc2}), then we set $g(v) = \bot$. Otherwise, we take an $(f_2, v, c)$-good edge $e$, maximizing $f_1(\source(e))$ w.r.t.~$\preceq$, and set $g_1(v) = f_1(\source(e)), g_2(v) = \delta(f_2(\source(e)), e)$.

We now define the next-move function of $S_2$. Let $f\colon V\to V\times M\cup\{\bot\}$ be a state and $v\in V_0$ be a node of Player $0$. If $f(v) \neq \bot$, we set $S_2(v, f) = S_1(v, f_2(v))$. Otherwise, we define $S_2(v, f)$ arbitrarily.

\medskip

It remains to establish the soundness and the completeness properties for all finite paths $p$ that are consistent with $S_2$. As before, we do so by induction on $|p|$.

We start with the induction base. Assume that $|p| = 0$. The initial state of $S_2$ is the function $f_{init}$. Recall that we have $f_{init}(v) = (v, m_{init})$ for every $v\in V$. So, to establish the soundness, we can set $p_1 = \lambda_v$ for every $v\in V$. For the completeness, observe that $f_{init}(\target(p)) = (\target(p), m_{init}) \neq \bot$ and, obviously, $\source(p) \preceq \target(p)$ (just because $p$ is a $0$-length path so that $\source(p) = \target(p)$).

Let us now perform the induction step. Assume that our claim is proved for all $p$ of length up to $k$. Take any $p = p^\prime e^\prime$ of length $k + 1$ which is consistent with $S_2$. Here $e^\prime$ is the last edge of $p$. Then $p^\prime$ is consistent with $S_2$ and has length $k$. Hence, we have the induction hypothesis for $p^\prime$ and for a function $f\colon V\to V\times M\cup\{\bot\}$ which is the state of $S_2$ after $p^\prime$. Next, let the state of $S_2$ after $p$ be $g\colon V\to V\times M \cup\{\bot\}$. Note that $g$ is the value of the transition function of $S_2$ on $f$ and $c = \col(e^\prime) \in C$.

To check the soundness for $p$ and $g$, one can use exactly the same argument as in Theorem \ref{thm:first_upper} for $f_2$ and $g_2$. That is, for any $v\in V$ with $g(v) \neq \bot$, we consider an $(f_2, v, c)$-good edge $e$ which was used to define $g(v)$. By \eqref{fvc1}, we have $f_2(\source(e)) \neq \bot \implies f(\source(e)) \neq \bot$. Then, using the induction hypothesis for $p^\prime$, we take a path $p_1^\prime$ establishing the soundness for $f$ and $p^\prime$ at $\source(e)$. Finally, we define $p_1 = p_1^\prime e$ and show that $p_1$ establishes the soundness for $g$ and $p$ at $v$. Obviously, $p_1$ is a path to $v$. By the same routine check as in the proof of Theorem \ref{thm:first_upper}, we have that, first, $p_1$ is consistent with $S_1$, second, $\col(p_1) = \col(p)$, and third, $\delta(m_{init}, p_1) = g_2(v)$. 
The only thing we have to additionally check is that $p_1$ starts in $g_1(v)$. Indeed, by definition, $g_1(v) = f_1(\source(e))$. Note that $p_1^\prime$ is a prefix of $p_1$, so these paths have the same starting node. In turn, since $p^\prime_1$ establishes the soundness for $f$ and $p^\prime$ at $\source(e)$, the starting node of $p_1^\prime$ must be $f_1(\source(e)) = g_1(v)$, as required.

We now check the completeness property for $p$ and $g$. It is sufficient to show the existence of an $(f_2, \target(p), c)$-good edge $e$ with $\source(p) \preceq f_1(\source(e))$.  Indeed, $g(\target(p)) \neq \bot$ if and only if $(f_2, \target(p), c)$-good edges exist, and $g_1(\target(p))$ is the maximum of $f_1(\source(e))$ w.r.t.~$\preceq$ over such edges.

We claim that $e^\prime$, the last edge of $p$, satisfies these conditions. To show this, recall that by the induction hypothesis we have the completeness  for $p^\prime$ and $f$. Let us first demonstrate that $e^\prime$ satisfies \eqref{fvc1} for $f_2$, $v = \target(p)$ and $c$. Indeed,  $\target(e^\prime) = \target(p)$ because $e^\prime$ is the last edge of $p$, $\col(e^\prime) = c$ by definition of $c$, and $f_2(\source(e^\prime)) = f_2(\target(p^\prime)) \neq \bot$ by the completeness for $p^\prime$ and $f$. Let us now verify \eqref{fvc2}. Assume that $\source(e^\prime) \in V_0$. Then, since $p = p^\prime e^\prime$ is consistent with $S_2$, we have $e^\prime = S_2(p^\prime)$. The state of $S_2$ after $p^\prime$ is $f$, so $S_2(p^\prime) = S_2(\target(p^\prime), f) = S_2(\source(e^\prime), f)$. Note that $f(\source(e^\prime)) = f(\target(p^\prime)) \neq \bot$ by the completeness for $p^\prime$ and $f$. Hence, by definition of $S_2$, we have  $S_2(\source(e^\prime), f) = S_1(\source(e^\prime), f_2(\source(e^\prime))$, and, thus, $e^\prime$
satisfies \eqref{fvc2}. Finally, we have to show that $\source(p) \preceq f_1(\source(e))$. This is because, by the completeness for $p^\prime$ and $f$, we have $\source(p^\prime) \preceq f_1(\target(p^\prime))$. It remains to note that $\source(p) = \source(p^\prime)$ and $\source(e^\prime) = \target(p^\prime)$ -- recall that $e^\prime$ is the last edge of $p$ and $p^\prime$ is the part of $p$ which precedes $e^\prime$.

\section{Proof of Theorem \ref{thm:lower_bound}}
\label{sec:lower_bound}
Let $\mathcal{A}$ be as on Figure \ref{arena}. 

\begin{figure}[h!]
\centering
  \includegraphics[width=\textwidth]{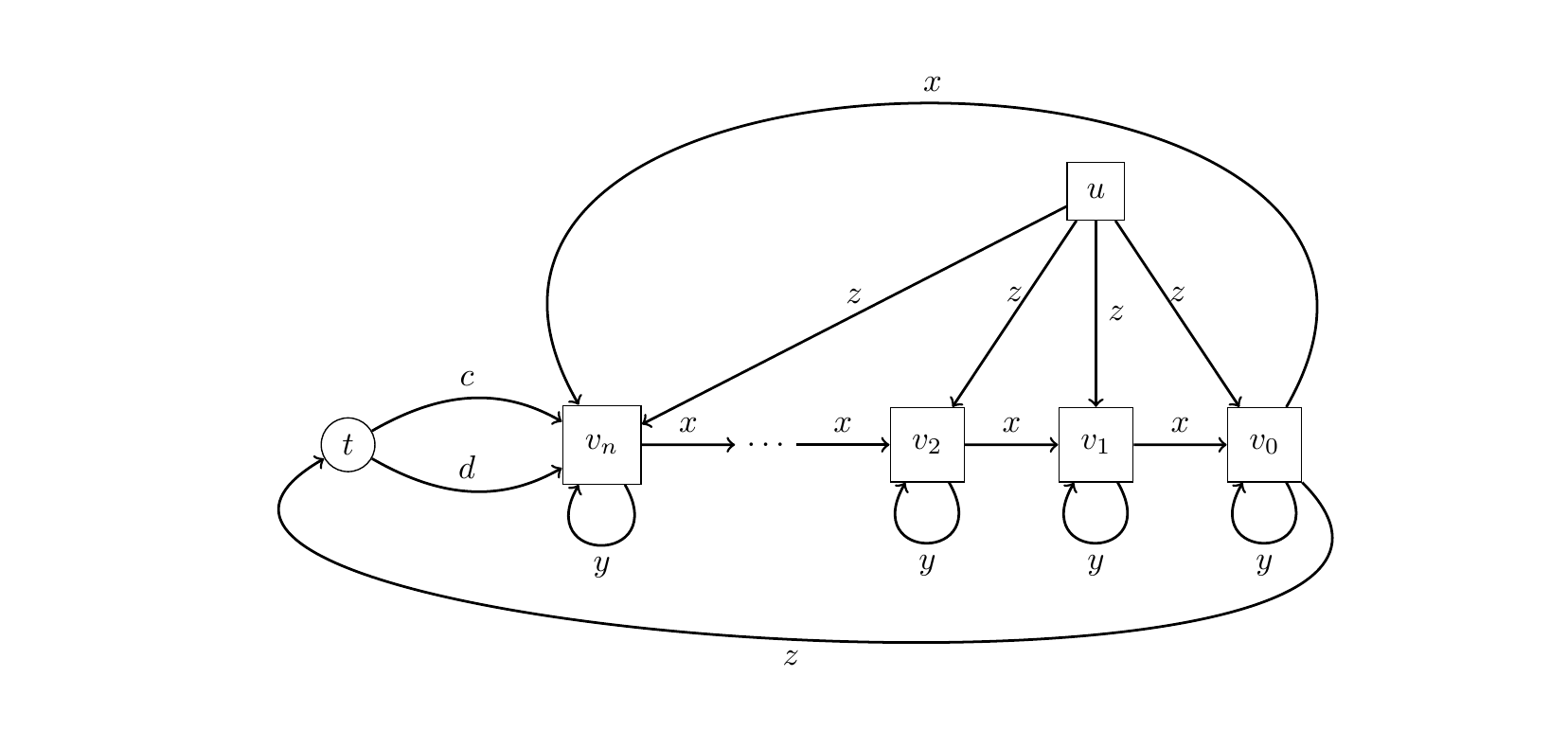}
  \caption{Arena $\mathcal{A}$. The set of colors is $C = \{x, y, z, c, d\}$. The partition of the nodes between the players is given by $V_0 = \{t\}, V_1 = \{u, v_0, v_1, \ldots, v_n\}$.}
\label{arena}
\end{figure}

We define $S_1$ as follows. Its memory structure maintains a number $\coun\in\{0, 1, \ldots, q - 1\}$. Initially, $\coun = 0$. When the memory structure of $S_1$ passes through any $y$-colored edge, it increments $\coun$ by $1$ modulo $q$. In turn, when we go from $v_0$ to $v_n$, it sets $\coun = 0$. In all the other cases, the value of $\coun$ does not change. It remains to define how $S_1$ acts at $t$ (this is the only node from where Player 0 is the one to move). There are two edges from $t$, both go to $v_0$, but one is $c$-colored and the other is $d$-colored. If $\coun = 0$, then $S_1$ uses the $c$-colored edge. If $\coun\neq 0$, then $S_1$ uses the $d$-colored edge.

For brevity, if $p$ is a finite path in $\mathcal{A}$, then by $\coun(p)$ we denote the value of $\coun$ after $p$.

We need the following definition and the following lemma about paths in the arena $\mathcal{A}$.
\begin{definition} Define a function $f\colon\{x, y\}^*\to\{0, 1, \ldots, q - 1\}$ as follows. Take any $w\in\{x, y\}^*$. To define $f(w)$, first define a word $w^\prime \in\{x, y\}^*$. Namely, if $w$ has at most $n$ occurrences of $x$, then set $w^\prime = w$. Otherwise, take the $(n+1)$st occurrence of $x$ from the right, erase it and everything to its left, and let the remaining word be $w^\prime$. Finally, let $f(w)$ be the number of $y$'s in $w^\prime$ modulo $q$.
\end{definition}

\begin{lemma}
\label{lemma:paths}
For any $w\in\{x, y\}^*$ the following holds:
\begin{description}
\item[\textup{(a)}] there exists a finite path $p$ with $\source(p) = u$, $\target(p) = t$ and $\col(p) = zwz$;
\item[\textup{(b)}] for any finite path $p$, if $\source(p) = u$ and $\col(p) = zwz$, then $\target(p) = t$ and $\coun(p) = f(w)$.
\end{description}
\end{lemma}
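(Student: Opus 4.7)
The plan is first to analyze the shape of any path from $u$ colored $zwz$, exploiting the near-determinism of $\mathcal{A}$, and then to carry out a modular-arithmetic count for the counter value.

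I would begin by reading off from $\mathcal{A}$ the following structural facts: the only outgoing edges of $u$ are $z$-edges to the various $v_j$'s; the only $z$-edge entering $t$ is $v_0 \xrightarrow{z} t$; and inside the $v_i$'s the letter $x$ acts as a backward shift along the cycle $v_n \to v_{n-1} \to \cdots \to v_0 \to v_n$ (the last arrow being the distinguished reset edge), while $y$ is a self-loop. Hence every path from $u$ colored $zwz$ must have the shape $u \xrightarrow{z} v_j \to \cdots \to v_0 \xrightarrow{z} t$ for some $j$, with the middle walk determined by $w$ alone. A quick induction on $|w|$ shows that starting from $v_j$ and reading $w$ lands one at $v_{(j-k)\bmod(n+1)}$, where $k$ denotes the number of $x$'s in $w$. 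Forcing this endpoint to equal $v_0$ (so that the final $z$-edge is available) pins down the unique choice $j = k \bmod (n+1)$. This immediately yields part (a) (take exactly this $j$ and concatenate the edges prescribed by $w$) and the $\target(p) = t$ half of part (b) (any valid path is forced to use this unique $j$ and then the unique $z$-edge to $t$).

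For the counter equality in (b), I would track the two events that modify $\coun$: each $y$-edge increments, and each traversal of the distinguished edge $v_0 \xrightarrow{x} v_n$ resets. Starting at $v_j = v_{k \bmod (n+1)}$ and writing $k = m(n+1) + j$ with $m = \lfloor k/(n+1) \rfloor$, the reset edge is traversed on exactly the $(j+1)$-th, $(j+n+2)$-th, $\ldots$, $(j+1+(n+1)(m-1))$-th $x$'s of $w$, so the last reset, when one occurs, sits at the $(k-n)$-th $x$ from the left, i.e.\ the $(n+1)$-th $x$ from the right. In the boundary case $k \le n$ one has $m = 0$, no reset occurs, and $\coun(p)$ equals the total $y$-count in $w$ modulo $q$, which is $f(w)$. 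In the case $k > n$, $\coun(p)$ counts exactly the $y$'s appearing strictly to the right of the $(n+1)$-th-from-right $x$, again matching $f(w)$ by definition.

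The main obstacle is the modular bookkeeping in the last paragraph: one must check that the arithmetic progression of reset positions, starting at the $(j+1)$-th $x$ and incremented by $n+1$, terminates exactly at the $(k-n)$-th $x$, so that the "since-last-reset" window used by $S_1$ coincides with the suffix $w'$ used in the definition of $f$. Everything else is topological and follows routinely from the essentially deterministic inner structure of $\mathcal{A}$ together with the fact that $v_0$ is the unique gateway into $t$ via a $z$-edge.
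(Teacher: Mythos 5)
Your proof is correct and takes essentially the same approach as the paper's: decompose any $zwz$-colored path from $u$ as $u\to v_j\to\cdots\to v_0\to t$, note that the number of $x$'s modulo $n+1$ forces the choice of $j$ (giving both existence and $\target(p)=t$), and identify the last traversal of the reset edge $v_0\to v_n$ with the $(n+1)$-st occurrence of $x$ from the right so that the counter matches $f(w)$. The only difference is cosmetic --- you track the reset positions as an explicit arithmetic progression, whereas the paper argues the same fact qualitatively (after the last reset exactly $n$ further $x$-edges remain before reaching $v_0$).
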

\begin{proof}
 Let $X$ be the number of occurrences of $x$ in $w$. Set $i$ to be the remainder of $X$ when divided by $n + 1$.

We start by showing \textbf{\textup{(a)}}. To construct $p$, we first go from $u$ to $v_i$. Then we start reading letters of $w$ one by one from left to right. Every time we read a new letter, we move from our current location via some edge colored by this letter. It remains to show that after reading the whole $w$ we end up in $v_0$, which has an out-going $z$-colored edge to $t$.
 Indeed, if we forget about $y$'s, then we are just rotating counterclockwise along the cycle $v_n\to\ldots\to v_1\to v_0\to v_n$. The length of this cycle is $n + 1$, and the distance from $v_i$ to $v_0$, measured counterclockwise, is $i$. Thus, since in $w$ there are $X\equiv i \pmod{n+1}$ occurrences of $x$, we end up in $v_0$.

We now show \textbf{\textup{(b)}}. Consider any finite path $p$ with $\source(p) = u$ and $\col(p) = zwz$. Observe that once we left $u$, it is impossible to come back to it again. Therefore, since the last edge of $p$ is $z$-colored, this edge must be from $v_0$ to $t$. Hence, $\target(p) = t$.

It remains to show that $\coun(p) = f(w)$. Assume first that $p$ never goes from $v_0$ to $v_n$. Then $\coun(p)$ is the number of $y$'s modulo $q$ in $w$, because $\col(p) = zwz$. Thus, to show that $\coun(p) = f(w)$ in this case, it is enough  to show $X\le n$. Indeed, the first edge of $p$ is from $u$ to $v_j$, for some $j\in\{0, 1, \ldots, n\}$. Then it makes $X$ steps along the cycle $v_n\to\ldots\to v_1\to v_0\to v_n$. If $X$ were at least $n + 1$, then $p$ had to go from $v_0$ to $v_n$ at least once, contradiction.

Now, assume that $p$ contains edges from $v_0$ to $v_n$. By definition, $\coun(p)$ equals the number of $y$-colored edges in $p$ modulo $q$ after the last time $p$ went from $v_0$ to $v_n$. To show that $\coun(p) = f(w)$, we have to show that the number of $x$-colored edges in $p$ after the last time $p$ went from $v_0$ to $v_n$ is $n$ (then the last edge from $v_0$ to $v_n$ in $p$ corresponds to the $(n+1)$st occurrence of $x$ in $w$ from the right). Indeed, as we discussed above, the last edge of $p$ must be from $v_0$ to $t$. Obviously, if we go from $v_n$ to $v_0$ without going to $v_n$ again after this, then the number of times we pass an $x$-colored edge is exactly $n$.

\end{proof}

This gives the following fact about the set $\col(S_1, u)$.
\begin{corollary}
\label{cor:prefix}
 For any $w\in\{x, y\}^*$ the following holds. If $zwzc$ is a prefix of some  sequence from $\col(S_1, u)$, then $f(w) = 0$. In turn, if $zwzd$ is a prefix of some sequence from $\col(S_1, u)$, then $f(w) \neq 0$.
\end{corollary}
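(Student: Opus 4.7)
The plan is to apply Lemma~\ref{lemma:paths}(b) to decode the state of the counter $\coun$ at the moment the play reaches $t$, and then invoke the definition of $S_1$ at $t$.

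First I would handle the $c$-case. Suppose $zwzc$ is a prefix of some infinite sequence from $\col(S_1, u)$. Unpacking the definition of $\col(S_1, u)$, there exists an infinite path $P$ starting at $u$, consistent with $S_1$, whose coloring begins with $zwzc$. Let $p$ be the prefix of $P$ consisting of all edges that produce the coloring $zwzc$, and let $p'$ be the prefix of $p$ obtained by removing the last (the $c$-colored) edge. Then $\source(p') = u$ and $\col(p') = zwz$, so by Lemma~\ref{lemma:paths}(b) we get $\target(p') = t$ and $\coun(p') = f(w)$.

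The key next step is to read off the strategy's move at $t$. Since $t \in V_0$, since $p$ is consistent with $S_1$, and since the last edge of $p$ leaves $t$ and is $c$-colored, that edge must be the one selected by $S_1$ at $t$ in memory state $\coun(p') = f(w)$. By the defining rule of $S_1$ at $t$, the $c$-colored outgoing edge is chosen precisely when $\coun = 0$; hence $f(w) = 0$, as required.

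The $d$-case is entirely symmetric: the same construction produces a prefix $p'$ with $\target(p') = t$ and $\coun(p') = f(w)$, and since $S_1$ selects the $d$-colored edge at $t$ only when $\coun \neq 0$, we conclude $f(w) \neq 0$. I do not anticipate any real obstacle: the corollary is essentially a one-line translation of Lemma~\ref{lemma:paths}(b) through the definition of $S_1$ at its unique Player-$0$ node $t$, the only subtlety being the bookkeeping step of peeling off the final edge to apply Lemma~\ref{lemma:paths}(b) to a path whose coloring is exactly $zwz$.
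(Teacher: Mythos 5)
Your proposal is correct and matches the paper's own proof: both peel off the final $c$- or $d$-colored edge, apply Lemma~\ref{lemma:paths}(b) to the remaining $zwz$-colored prefix to conclude it ends at $t$ with counter value $f(w)$, and then read off the forced move of $S_1$ at $t$ from its defining rule. The paper phrases it uniformly for $h\in\{c,d\}$ rather than as two symmetric cases, but the argument is the same.
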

\begin{proof}
Fix $h\in \{c, d\}$. Take any $w\in\{x, y\}^*$ such that $zwzh$ is a prefix of some sequence of $\col(S_1, u)$. We show that $h = c \iff f(w) = 0$.

By definition of $\col(S_1, u)$, there exists a finite path $p$ with $\source(p) = u, \col(p) = zwzh$ which is consistent with $S_1$. Let $p_1$ be the part of $p$ which precedes its last edge. Since $\source(p_1) = u$ and $\col(p_1) = zwz$, we have by the item \textbf{\textup{(b)}} of Lemma \ref{lemma:paths} that $\target(p_1) = t$ and $\coun(p_1) = f(w)$. 

The node $t$ is controlled by Player $0$. Hence, since $p$ is consistent with $S_1$, the last edge of $p$ must be equal to $S_1(p_1)$. The color of $S_1(p_1)$ is $h$. In turn, $\coun(p_1)$ is the state of $S_1$ after $p_1$. Therefore, by definition of $S_1$, the color of $S_1(p_1)$ is $c$ if and only if $\coun(p_1) = f(w) = 0$. The lemma is proved. 
\end{proof}

Consider now any chromatic $Q$-state strategy $S_2$ such that $\col(S_2, u) \subseteq \col(S_1, u)$. Let $\mathcal{M} = \langle M, m_{init}, \sigma\colon M\times\{x, y, z, c, d\}\to M\rangle$ be its (chromatic) memory structure. To show that $Q\ge q^n$, in Lemma \ref{lemma:different} we provide $q^n$ words from $\{x, y, z, c, d\}^*$ such that $\sigma(m_{init}, \cdot)$ must  take different values on these words.

\begin{definition}
Let $g\colon\{0, 1, \ldots, q -1\}^n\to \{x, y\}^*$ be the following function:
\[g\colon (i_1, i_2, \ldots, i_n) \mapsto  x y^{i_1} xy^{i_2}\ldots x y^{i_n}.\]
\end{definition}

\begin{lemma}
\label{lemma:different}
 For any $\kappa_1, \kappa_2\in\{0, 1, \ldots, q - 1\}^n$ such that $\kappa_1\neq \kappa_2$ we have $\sigma(m_{init}, zg(\kappa_1)) \neq \sigma(m_{init}, zg(\kappa_2))$.
\end{lemma}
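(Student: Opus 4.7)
The plan is by contradiction: I would assume the two memory states coincide, use chromaticity to merge them, and then exhibit an infinite play consistent with $S_2$ whose coloring is forbidden by $S_1$, contradicting $\col(S_2,u)\subseteq\col(S_1,u)$. The key construction is a common suffix $s\in\{x,y\}^*$ for which $f(g(\kappa_1)s)$ and $f(g(\kappa_2)s)$ land on different residues modulo $q$, one of them being $0$; Corollary~\ref{cor:prefix} then converts $S_2$'s forced move at $t$ (either the $c$- or the $d$-colored edge) into a contradiction on one of the two sides.

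To build $s$, I would write $\kappa_\ell=(i_1^{(\ell)},\ldots,i_n^{(\ell)})$ for $\ell\in\{1,2\}$ and let $j\in\{1,\ldots,n\}$ be the \emph{largest} index with $i_j^{(1)}\neq i_j^{(2)}$. Then I would take $s=x y^{a_1} x y^{a_2}\cdots x y^{a_j}$ containing exactly $j$ copies of $x$. Since $s$ has $j\le n$ occurrences of $x$, the $(n+1)$st $x$ from the right inside $g(\kappa_\ell)s$ is precisely the $j$-th $x$ of $g(\kappa_\ell)$, so unrolling the definition of $f$ gives
\[
f(g(\kappa_\ell)s)\equiv i_j^{(\ell)}+\sum_{k=j+1}^{n}i_k^{(\ell)}+\sum_{k=1}^{j}a_k\pmod{q}.
\]
By maximality of $j$, these two values differ by $i_j^{(1)}-i_j^{(2)}\not\equiv 0\pmod q$, so I can pick $a_1,\ldots,a_j$ (e.g.\ $a_2=\cdots=a_j=0$ and $a_1$ a suitable residue) so that $f(g(\kappa_1)s)=0$, whence automatically $f(g(\kappa_2)s)\neq 0$.

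Now, assuming for contradiction that $\sigma(m_{init},zg(\kappa_1))=\sigma(m_{init},zg(\kappa_2))$, chromaticity yields $\sigma(m_{init},zg(\kappa_1)sz)=\sigma(m_{init},zg(\kappa_2)sz)=:m^*$. I would then case on whether $S_2(t,m^*)$ is the $c$-edge or the $d$-edge; swapping $\kappa_1\leftrightarrow\kappa_2$ reduces to the $c$-edge case. Lemma~\ref{lemma:paths}(a) produces a finite path $p$ from $u$ to $t$ with $\col(p)=zg(\kappa_2)sz$. Since the only $z$-colored edges of $\mathcal{A}$ either leave $u$ or go $v_0\to t$, and since $u$ is unreachable once left, the node $t$ appears only at the very end of $p$; in particular Player 0 is never asked to move along $p$, so $p$ is trivially $S_2$-consistent. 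Appending the $c$-edge (consistent with $S_2$ by definition of $m^*$) and extending arbitrarily to an infinite $S_2$-play (using $S_2$ at every subsequent visit to $t$, possible since every node has positive out-degree) yields a sequence in $\col(S_2,u)\subseteq\col(S_1,u)$ whose prefix is $zg(\kappa_2)szc$. Corollary~\ref{cor:prefix} then forces $f(g(\kappa_2)s)=0$, contradicting our choice of $s$.

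The only subtle step I anticipate is verifying the trivial $S_2$-consistency of $p$, which hinges on the structural observation that the color $z$ labels only the two very restricted families of edges described above, so $t$ cannot be re-entered internally along $p$; everything else is direct arithmetic in $\mathbb{Z}/q\mathbb{Z}$ paired with a single invocation of Corollary~\ref{cor:prefix}.
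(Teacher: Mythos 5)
Your proposal is correct and follows essentially the same route as the paper's: the suffix $s$ you construct plays exactly the role of the word $w = x^k y^r$ in the paper's Lemma~\ref{lemma:word} (same choice of the largest differing index, same arithmetic modulo $q$), and the rest --- realizing the two colorings as $S_2$-consistent paths via Lemma~\ref{lemma:paths}(a), using chromaticity to force the same move at $t$, and contradicting Corollary~\ref{cor:prefix} on one side or the other depending on whether that move is $c$- or $d$-colored --- matches the paper's argument step for step. The only differences are cosmetic: you interleave the padding $y$'s instead of appending $x^ky^r$, and you make explicit the extension of the finite $S_2$-play to an infinite one, which the paper leaves implicit.
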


To establish Lemma \ref{lemma:different}, we first need the following lemma.

\begin{lemma}
\label{lemma:word}
 For any $\kappa_1, \kappa_2\in\{0, 1, \ldots, q - 1\}^n$ such that $\kappa_1\neq \kappa_2$ there exists a word $w\in \{x, y\}^*$ such that $f(g(\kappa_1) w) = 0$ and $f(g(\kappa_2)w)\neq 0$.
\end{lemma}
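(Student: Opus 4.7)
The plan is to exploit the fact that $f$ depends only on the suffix of its argument following the $(n+1)$st $x$ from the right. Write $\kappa_1 = (i_1, \ldots, i_n)$ and $\kappa_2 = (j_1, \ldots, j_n)$, and let $\ell^*$ be the largest index at which they differ; this $\ell^*$ exists because $\kappa_1 \neq \kappa_2$, and by maximality one has $i_\ell = j_\ell$ for every $\ell > \ell^*$.

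I would then try a word of the form $w = x y^A x^{\ell^* - 1}$, where $A \in \{0, 1, \ldots, q-1\}$ will be chosen at the end. Since $g(\kappa)$ contains exactly $n$ occurrences of $x$ (one immediately preceding each block $y^{i_\ell}$), the word $g(\kappa) w$ contains $n + \ell^*$ occurrences of $x$, of which the last $\ell^*$ lie inside $w$. Counting from the right, the $(n+1)$st $x$ in $g(\kappa) w$ is therefore the $\ell^*$-th $x$ of $g(\kappa)$ from the left, i.e.~the $x$ immediately preceding the block $y^{i_{\ell^*}}$ (resp.~$y^{j_{\ell^*}}$). Erasing it and everything to its left in $g(\kappa_1) w$ leaves a word whose number of $y$'s equals $i_{\ell^*} + i_{\ell^*+1} + \cdots + i_n + A$, and analogously for $\kappa_2$. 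Hence
\[
  f(g(\kappa_1) w) \equiv A + \sum_{\ell \geq \ell^*} i_\ell \pmod q, \qquad
  f(g(\kappa_2) w) \equiv A + \sum_{\ell \geq \ell^*} j_\ell \pmod q.
\]
Now I choose $A \in \{0, 1, \ldots, q-1\}$ so that the first expression is $\equiv 0 \pmod q$, which forces $f(g(\kappa_1) w) = 0$. Because $i_\ell = j_\ell$ for every $\ell > \ell^*$, subtracting yields $f(g(\kappa_2) w) \equiv j_{\ell^*} - i_{\ell^*} \pmod q$, which is nonzero since $i_{\ell^*}, j_{\ell^*}$ are distinct elements of $\{0, 1, \ldots, q-1\}$.

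The main subtlety, and the step I would verify most carefully, is the claim that the $(n+1)$st $x$ from the right of $g(\kappa) w$ lands on the $\ell^*$-th $x$ of $g(\kappa)$ regardless of whether $\kappa = \kappa_1$ or $\kappa = \kappa_2$. This is where the uniform shape of $g$ is essential: every $g(\kappa)$ has the same number of $x$'s in the same relative positions among the $y$-blocks, so the $y$-exponents do not affect the count of $x$'s and both erasures cut at identical places, isolating precisely the discrepancy at index $\ell^*$ thanks to the maximality choice. The boundary cases $\ell^* = 1$ (where $w = x y^A$) and $\ell^* = n$ (where $w = x y^A x^{n-1}$) fit the same formula and require no separate treatment.
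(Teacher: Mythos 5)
Your proof is correct and is essentially the paper's argument: the paper takes $w = x^{k} y^{r}$ where $k$ is your $\ell^*$ and $r$ is your $A$, which also has exactly $\ell^*$ occurrences of $x$, so the $(n+1)$st $x$ from the right lands on the $x$ preceding the $y^{i_{\ell^*}}$-block in both constructions, and the adjustable $y$-block (placed at the end rather than in the middle of $w$) is likewise chosen to annihilate $i_{\ell^*}+\cdots+i_n$ modulo $q$. The only difference is the cosmetic placement of the $y^A$ block inside $w$; the key points — maximality of the differing index and the resulting discrepancy $j_{\ell^*}-i_{\ell^*}\not\equiv 0 \pmod q$ — are identical.
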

\begin{proof}
Assume that $\kappa_1 = (i_1, i_2, \ldots, i_n)$ and $\kappa_2 = (j_1, \ldots, j_n)$. Take the largest $k\in\{1,2, \ldots, n\}$ such that $i_k \neq j_k$. Let $r\in\{0, 1, \ldots, q - 1\}$ be such that $i_k + i_{k + 1} + \ldots + i_n + r\equiv 0 \pmod{q}$. Define $w = x^ky^r$. Thus, 
\begin{align*}
g(\kappa_1) w &=  x y^{i_1} xy^{i_2}\ldots  xy^{i_k} \ldots x y^{i_n} x^k y^r ,\\
g(\kappa_2) w &=  x y^{j_1} xy^{j_2}\ldots  xy^{j_k} \ldots x y^{j_n} x^k y^r.
\end{align*}
Observe that the $(n+1)$st occurrence of $x$ in $g(\kappa_1) w$ is one before $y^{i_k}$. Similarly, the $(n+1)$st occurrence of $x$ in $g(\kappa_2) w$ is one before $y^{j_k}$.
Hence, by definition of $f$, we have:
\begin{align*}
f(g(\kappa_1) w) &\equiv  i_k + i_{k + 1} + \ldots + i_n + r\pmod{q},\\
f(g(\kappa_2) w) &\equiv j_k + j_{k + 1} + \ldots + j_n + r\pmod{q}.
\end{align*}
By definition of $r$, we have $f(g(\kappa_1) w) = 0$. In turn, by definition of $k$, we have $i_k \neq j_k$ and $i_{k + 1} = j_{k + 1}, \ldots, i_n = j_n$. The numbers $i_k, j_k$ are different elements of $\{0, 1, \ldots, q - 1\}$, which means that their difference is not divisible by $q$. Hence,  $f(g(\kappa_2) w) \neq f(g(\kappa_1) w) = 0$.
\end{proof}

To conclude the proof of the theorem, it remains to derive Lemma \ref{lemma:different} from Lemma \ref{lemma:word}.
Assume for contradiction that 
\begin{equation}
\label{trans_eq}
\sigma(m_{init}, zg(\kappa_1)) = \sigma(m_{init}, zg(\kappa_2))
\end{equation}
 for some $\kappa_1, \kappa_2\in\{0, 1, \ldots, q - 1\}^n$, $\kappa_1\neq \kappa_2$. By Lemma \ref{lemma:word} there exists $w\in \{x, y\}^*$ such that
\begin{equation} 
\label{trans_ineq}
f(g(\kappa_1) w) = 0,\qquad f(g(\kappa_2)w)\neq 0.
\end{equation}
 By the item \textbf{\textup{(a)}} of Lemma \ref{lemma:paths} there exist two finite paths $p_1$ and $p_2$ such that
\begin{align*}
\source(p_1) &= \source(p_2) = u, \\
\target(p_1) &= \target(p_2) = t, \\
\col(p_1) &= zg(\kappa_1) wz, \qquad \col(p_2) = zg(\kappa_2) wz.
\end{align*}
The paths $p_1$ and $p_2$ do not have $c,d$-colored edges. That is, they do not have edges that start at $t$. This means that these paths are consistent with $S_2$. We claim that $S_2(p_1) = S_2(p_2)$. Indeed, since $S_2$ is chromatic, the value of $S_2(p)$ for a finite path $p$ with $\target(p) = t$ is completely determined by $\sigma(m_{init}, \col(p))$. Now, by \eqref{trans_eq} we have that:
\[
\sigma(m_{init}, \col(p_1)) = \sigma(m_{init}, zg(\kappa_1) wz) = \sigma(m_{init}, zg(\kappa_2) wz) = \sigma(m_{init}, \col(p_2)).
\]
So let $e = S_2(p_1) = S_2(p_2)$. The paths $p_1 e$ and $p_2 e$ are both consistent with $S_2$. Since $\col(S_2, u) \subseteq \col(S_1, u)$, we have that $\col(p_1 e)$ is a prefix of some sequence from $\col(S_1, u)$, and so is $\col(p_2 e)$. This gives a contradiction with Corollary \ref{cor:prefix}. Indeed, assume first that $\col(e) = c$. Then $\col(p_2 e) = zg(\kappa_2) wz c$ is a prefix of some sequence from $\col(S_1, u)$, but $f(g(\kappa_2) w) \neq 0$ by \eqref{trans_ineq}, contradiction. Similarly, if $\col(e) = d$, then $\col(p_1 e) = zg(\kappa_1) wz d$ is a prefix of some sequence from $\col(S_1, u)$,  but $f(g(\kappa_1) w) = 0$ by \eqref{trans_ineq}, contradiction.

\end{document}